\theoremstyle{plain}
\newtheorem{corollary}{Corollary}
\newtheorem{lemma}{Lemma}
\newtheorem{proposition}{Proposition}
\newtheorem{theorem}{Theorem}
\theoremstyle{definition}
\newtheorem{problem}{Problem}
\newtheorem{claim}{Claim}
\newcommand{\IG}{\mathbb{D}}
\newcommand{\Fix}{\mathrm{Fix}}
\newcommand{\fix}{\mathrm{fix}}
\newcommand{\Per}{\mathrm{Per}}
\newcommand{\per}{\mathrm{per}}
\newcommand{\Ima}{\mathrm{Ima}}
\newcommand{\functions}{\mathrm{F}}
\newcommand{\feedback}{\tau}
\newcommand{\packing}{\nu}
\newcommand{\guessing}{\mathrm{g}}
\newcommand{\neighbourhood}{N}
\newcommand{\inn}[1]{#1_\mathrm{in}}
\newcommand{\NIn}{\inn{\neighbourhood}}
\newcommand{\out}[1]{#1_\mathrm{out}}
\newcommand{\NOut}{\out{\neighbourhood}}
\newcommand{\dIn}{\inn{d}}
\newcommand{\girth}{\gamma}
\newcommand{\rank}{\mathrm{rank}}
\newcommand{\avg}{\mathrm{avg}}
\newcommand{\minrank}{\rank^-}
\newcommand{\avgrank}{\rank}
\newcommand{\maxrank}{\rank^+}
\newcommand{\crank}{\rank^\land}
\newcommand{\maxper}{\per^+}
\newcommand{\minfix}{\fix^-}
\newcommand{\maxfix}{\fix^+}
\renewcommand{\r}{\mathrm{r}}
\newcommand{\cliqueCover}{\mathrm{\pi}}
\newcommand{\fractional}[1]{#1^*}
\newcommand{\fractionalPacking}{\fractional{\packing}}
\newcommand{\fractionalCliqueCover}{\fractional{\cliqueCover}}
\newcommand{\loopfull}[1]{{#1}^\circ}
\DeclareFontFamily{U}{matha}{}
\DeclareFontShape{U}{matha}{m}{n}{
  <-5.5>    matha5
  <5.5-6.5> matha6 
  <6.5-7.5> matha7
  <7.5-8.5> matha8
  <8.5-9.5> matha9
  <9.5-11>  matha10
  <11->     matha12
}{}
\DeclareSymbolFont{matha}{U}{matha}{m}{n}
\DeclareFontFamily{U}{mathx}{\hyphenchar\font45}
\DeclareFontShape{U}{mathx}{m}{n}{<-> mathx10}{}
\DeclareSymbolFont{mathx}{U}{mathx}{m}{n}
\DeclareMathDelimiter{\ldbrack}{4}{matha}{"76}{mathx}{"30}
\DeclareMathDelimiter{\rdbrack}{5}{matha}{"77}{mathx}{"38}
\DeclareMathSymbol{\bigovoid}{1}{mathx}{"EC}
\newcommand{\<}{\ldbrack}
\renewcommand{\>}{\rdbrack}
\title{On the influence of the interaction graph on a finite dynamical system}
\author{Maximilien Gadouleau\footnote{Departement of Computer Science, Durham University, Durham, UK. m.r.gadouleau@durham.ac.uk}}
\date{\today}
\begin{document}

\maketitle

\begin{abstract}
A finite dynamical system (FDS) is a system of multivariate functions over a finite alphabet, that is typically used to model a network of interacting entities. The main feature of a finite dynamical system is its interaction graph, which indicates which local functions depend on which variables; the interaction graph is a qualitative representation of the interactions amongst entities on the network. As such, a major problem is to determine the effect of the interaction graph on the dynamics of the FDS. In this paper, we are interested in three main properties of an FDS: the number of images (the so-called rank), the number of periodic points (the so-called periodic rank) and the number of fixed points. In particular, we investigate the minimum, average, and maximum number of images (or periodic points,  or fixed points) of FDSs with a prescribed interaction graph and a given alphabet size; thus yielding nine quantities to study. The paper is split into two parts. The first part considers the minimum rank, for which we derive the first meaningful results known so far. In particular, we show that the minimum rank decreases with the alphabet size, thus yielding the definition of an absolute minimum rank. We obtain lower and upper bounds on this absolute minimum rank, and we give classification results for graphs with very low (or highest) rank. The second part is a comprehensive survey of the results obtained on the nine quantities described above. We not only give a review of known results, but we also give a list of relevant open questions.
\end{abstract}

\maketitle

\section{Introduction}

Networks of interacting entities can be modelled as follows. The network consists of $n$ entities, where each entity $v$ has a local state represented by a $q$-ary variable $x_v \in \<q\> = \{0,1,\dots,q-1\}$, which evolves according to a deterministic function $f_v : \<q\>^n \to \<q\>$ of all the local states. More concisely, the state of the system is $x = (x_1,\dots, x_n) \in \<q\>^n$, which evolves according to a deterministic function $f = (f_1,\dots,f_n) : \<q\>^n \to \<q\>^n$, referred to as a Finite Dynamical System (FDS). FDSs have been used to model different networks, such as gene networks, neural networks, social networks, or network coding (see \cite{GR16} and references therein for the applications of FDSs). In view of their versatility, a recent stream of work is devoted to the study of FDSs per se, without a particular application in mind.


The architecture of an FDS $f: \<q\>^n \to \<q\>^n$ can be represented via its \textbf{interaction graph} $\IG(f)$, which indicates which update functions depend on which variables.  A major problem about FDSs is then to predict some of their dynamical features according to their interaction graphs. Perhaps the first example of such a result is due to Robert \cite{Rob80}, who showed that if the interaction graph is acyclic, then $f^n(x) = c$ for some $c \in \<q\>^n$. However, due to the wide variety of possible local functions, determining properties of an FDS given its interaction graph is in general a difficult problem.

In this paper, we are interested in the following three dynamical features of an FDS. A \textbf{fixed point} of $f$ is a stationary state (i.e. $f(x) = x$); a \textbf{periodic point} is a recurring state (i.e. $f^k(x) =x$ for some $k$); and an \textbf{image} is a reachable state (i.e. $f(y) = x$ for some $y$). We consider the number of fixed points, of periodic points, and of images of FDSs. In order to illustrate the influence of the interaction graph, we consider the set of all FDSs $f : \<q\>^n \to \<q\>^n$ with a given interaction graph $D$, and we study the minimum, average and maximum value of the three properties described above. This yields nine quantities for a given graph $D$ and alphabet size $q$.

Those nine quantities have been studied to various degrees. Arguably the quantity that attracted the most attention is the maximum number of fixed points, in particular due to its relationship with network coding \cite{Rii06, Rii07, GR11}. On the other hand, two quantities have not been studied so far. Firstly, the average number of periodic points seems very complex to estimate, as even in the case where $n = 1$ it requires some sophisticated machinery (see below for more explanation). Secondly, the first part of this paper is devoted to the study of the minimum number of images, where we can obtain some interesting results.

We decided to limit the scope of this survey and to present a comprehensive survey for that scope, instead of a limited survey of a broader topic. Nonetheless, let us mention three main strands of work on the dynamics of FDSs that do not fit the scope of this paper. Firstly, the signed interaction graph of an FDS not only encodes the fact that $f_v$ depends on the variable $x_u$, but also whether it is a monotonically non-decreasing (arc signed positively) or monotonically non-increasing (arc signed negatively) or neither (both signs given on the arc) function of $x_u$. A large amount of work considers the influence of the signed interaction graph (see \cite{PR10} for a survey of such work); we shall use some of these results but we will not carry out a comprehensive survey of the signed interaction graph, let alone of its local variants. Secondly, some work also restricts the nature of the local functions $f_v$: it can be linear \cite{GR11, GRF16}, threshold \cite{GT83, Gol85}, conjunctive or disjunctive \cite{ADG04a, ARS14}, etc. Thirdly, a large body of work is devoted to the study of different update schedules, where instead of applying $f$ to $x$, for instance only one entity updates its state at each time step, and $x$ becomes $(x_1, \dots, x_{i-1}, f_i(x), x_{i+1}, \dots, x_n)$ for some $i$. The same FDS may yield completely different dynamics under different update schedules (see \cite{DNS12, GM12, NS17, BCG18} for instance).

The rest of the paper is organised as follows. Section \ref{sec:background} gives some formal definitions of FDSs and related concepts. Section \ref{sec:min_rank} then is devoted to the study of the minimum number of images. Section \ref{sec:survey} gives a survey of the nine properties described above, with a review of known results and a list of open problems.

\section{Formal definitions} \label{sec:background}

A (directed) graph is a pair $D = (V, E)$, where $V$ is the set of vertices and $E \subseteq V^2$ is the set of arcs. For a comprehensive account of graphs, the reader is directed to \cite{BG09a}. We shall use the following terminology and notation. The in-neighbourhood of a vertex $v$ is $\NIn(v) = \{u : uv \in E \}$ and its in-degree is $\dIn(v) = |\NIn(v)|$; out-neighbourhoods and out-degrees are defined similarly. The in-neighbourhood of a subset of vertices $S$ is $\NIn(S) = \bigcup_{s \in S} \NIn(s)$. A vertex with an empty in-neighbourhood is a source, while a vertex with an empty out-neighbourhood is a sink. All paths and cycles are directed unless stated otherwise. A loop on the vertex $v$ is the arc $vv$. The girth of $D$, denoted $\girth(D)$, is the shortest length of a cycle in $D$, or is infinity if $D$ is acyclic. A feedback vertex set is a set of vertices $I$ such that $D[V \setminus I]$ is acyclic; the minimum size of a feedback vertex set of $D$ is the transversal number of $D$, denoted as $\feedback(D)$. More relevant concepts will be given as required.

Let $n$ be a positive integer and denote $[n] := \{1, \dots, n\}$. Let $q$ be an integer greater than or equal to $2$ and denote $\<q\> := \{0, 1, \dots, q-1\}$. A state is any $x = (x_1, \dots, x_n) \in \<q\>^n$, where $x_i \in \<q\>$ is a local state. We shall use the following shorthand notation. First of all, we identify an element $i$ of $[n]$ and the corresponding singleton $\{i\}$. Also, for any set $S = \{s_1, \dots, s_k\} \subseteq [n]$, we denote $x_S = (x_{s_1}, \dots, x_{s_k})$, only taking the order of elements into account when necessary. Moreover, we denote the set $[n] \setminus S$ as $-S$, thus denoting $x = (x_i, x_{-i})$.

We denote the set of functions $f : \<q\>^n \to \<q\>^n$ as $\functions(n,q)$. A Finite Dynamical System (FDS) is any function $f \in \functions(n,q)$. In particular, a Boolean network is any FDS in $\functions(n,2)$. We view $f$ as $f = (f_1, \dots, f_n)$, where each $f_i : \<q\>^n \to \<q\>$ is a local function of the system. We use the same shorthand notation as above for functions as well, e.g. $f_S = (f_{s_1}, \dots, f_{s_k}) : \<q\>^n \to \<q\>^k$. 

The interaction graph of $f$, denoted $\IG(f)$, has vertex set $[n]$ and $uv$ is an arc in $\IG(f)$ if and only if $f_v$ depends essentially on $x_u$, i.e.
\[
	\exists a, b \in \<q\>^n \text{ such that } a_{-u} = b_{-u}, f_v(a) \ne f_v(b).
\]
Let $D = ([n], E)$ be a graph; we have $E \subseteq [n]^2$, i.e. $D$ is a directed graph, possibly with loops. Then we consider two sets of functions:
\begin{align*}
	\functions[D,q] &:= \{ f \in \functions(n,q) : \IG(f) = D  \},\\
	\functions(D,q) &:= \{ f \in \functions(n,q) : \IG(f) \subseteq D \}.
\end{align*}

We consider the following three dynamical properties.
\begin{enumerate}
	\item An image of and FDS $f$ is simply $x \in \<q\>^n$ such that there exists $y \in \<q\>^n$ with $x = f(y)$. The set of images is denoted $\Ima(f)$ and its size is the rank of $f$: $\rank(f) = |\Ima(f)|$.

	\item A periodic point is $x \in \<q\>^n$ such that there exists $k \ge 1$ with $f^k(x) = x$. The set of periodic points is denoted $\Per(f)$ and its size is the periodic rank of $f$: $\per(f) = |\Per(f)|$.

	\item A fixed point is $x \in \<q\>^n$ such that $f(x) = x$. The set of fixed points is denoted $\Fix(f)$ and its size is denoted $\fix(f) = |\Fix(f)|$.
\end{enumerate}

%
%
%

We are interested in the following nine quantities: the minimum, average, and maximum rank, periodic rank and number of fixed points of a function in $\functions[D,q]$. We denote
\begin{align*}
	\minrank[D,q] &:= \min \{ \rank(f) : f \in \functions[D,q] \}\\
	\avgrank[D,q] &:= \avg \{ \rank(f) : f \in \functions[D,q] \}\\
	\maxrank[D,q] &:= \max \{ \rank(f) : f \in \functions[D,q] \}.
\end{align*}
We use similar notation for the periodic rank and the number of fixed points. Moreover, we also consider their counterparts in $\functions(D,q)$; again, similar notation is used.

\section{Minimum rank} \label{sec:min_rank}

\subsection{Preliminary results}

For a given graph $D$, the minimum rank of a function in $\functions[D,q]$, viewed as a function of $q$, is particularly well-behaved.

\begin{lemma} \label{lem:mr_decreasing}
For any $D$, $\minrank[D,q]$ is a non-increasing function of $q$.
\end{lemma}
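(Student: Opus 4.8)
The plan is to show that from any FDS $f \in \functions[D,q]$ we can construct an FDS $g \in \functions[D,q+1]$ over the larger alphabet $\<q+1\>$ with $\rank(g) \le \rank(f)$; taking $f$ to be a minimiser over $\functions[D,q]$ then gives $\minrank[D,q+1] \le \rank(g) \le \rank(f) = \minrank[D,q]$, which is exactly the claim. The natural construction is to let $g$ agree with $f$ on the "old" cube $\<q\>^n \subseteq \<q+1\>^n$, and to define $g$ on the remaining states in a way that (i) does not create any new images and (ii) does not destroy or create any essential dependencies, so that $\IG(g) = D$ is preserved.

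First I would make precise what "does not create new images" should mean: I want every state $x \in \<q+1\>^n$ to be mapped by $g$ into $\Ima(f) \subseteq \<q\>^n$. A clean way is to pick, for each coordinate $v$, a "folding" map $\phi : \<q+1\> \to \<q\>$ that is the identity on $\<q\>$ and sends the new symbol $q$ to some fixed value, say $0$; then define $g_v(x) := f_v(\phi(x_1), \dots, \phi(x_n))$. This $g$ restricted to $\<q\>^n$ equals $f$, so $\Ima(g) \supseteq \Ima(f)$; but also every value of $g$ lies in $\Ima(f)$ since it is a value of $f$. Hence $\Ima(g) = \Ima(f)$ and $\rank(g) = \rank(f)$.

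The main obstacle is the interaction graph: I must check $\IG(g) = D$, i.e. $g_v$ depends essentially on $x_u$ exactly when $uv \in E$. If $uv \notin E$ then $f_v$ does not depend on $x_u$, hence neither does $g_v$ (composition with $\phi$ on each coordinate cannot introduce a dependence). The delicate direction is: if $uv \in E$, does $g_v$ still depend essentially on $x_u$? Since $f_v$ depends on $x_u$, there exist $a, b \in \<q\>^n$ with $a_{-u} = b_{-u}$ and $f_v(a) \ne f_v(b)$; because $\phi$ is the identity on $\<q\>$, the very same $a,b$ (viewed in $\<q+1\>^n$) witness $g_v(a) = f_v(a) \ne f_v(b) = g_v(b)$, so $g_v$ indeed depends essentially on $x_u$. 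Thus $\IG(g) = \IG(f) = D$ and $g \in \functions[D,q+1]$, completing the argument. I would also remark that the same construction shows monotonicity of $\minrank[D,q]$ along the finer relation "$q \mid q'$" is unnecessary here — a single step $q \to q+1$ already suffices, and iterating gives the full non-increasing statement.
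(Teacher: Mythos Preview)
Your proof is correct and follows essentially the same approach as the paper: the paper defines $\hat f(x) = f(\hat x)$ with $\hat x_v = \min(x_v, q-1)$, which is precisely your folding construction except that the new symbol $q$ is sent to $q-1$ rather than to $0$. Your write-up is in fact more detailed than the paper's, which leaves the verification of $\IG(\hat f) = D$ to the reader.
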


\begin{proof}
Let $f \in \functions[D,q]$, then consider $\hat{f} \in \functions(n, q+1)$ defined as $\hat{f}(x) = f(\hat{x})$, where $\hat{x}_v = \min( x_v, q-1 )$ for all $v$. Then it is easily checked that $\IG(\hat{f}) = D$ and $\rank(\hat{f}) \le \rank(f)$. 
\end{proof}

We can then consider the overall minimum rank
\[
	\minrank[D] := \min \{ \minrank[D,q] : q \ge 2 \}.
\]
In fact, this quantity remains unchanged if we consider infinite alphabets as well; as such, $\minrank[D]$ is a fundamental property of a graph.

\begin{corollary} \label{cor:mr}
For any $D$,
\[
	\minrank[D] = \lim_{q \to \infty} \minrank[D, q] \le \minrank[D, 2] \le 2^n.
\]
\end{corollary}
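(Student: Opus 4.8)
The plan is to deduce everything from Lemma~\ref{lem:mr_decreasing} together with a stabilisation argument. Since $\minrank[D,q]$ is a non-increasing sequence of non-negative integers (indexed by $q \ge 2$), it must eventually be constant; call the eventual value $m$. Then trivially $\minrank[D] = \min_{q\ge 2}\minrank[D,q] = m = \lim_{q\to\infty}\minrank[D,q]$, which gives the first equality. The inequality $\minrank[D] \le \minrank[D,2]$ is immediate from the definition of $\minrank[D]$ as a minimum over all $q \ge 2$. Finally, $\minrank[D,2] \le 2^n$ because the rank of any $f \in \functions[D,2] \subseteq \functions(n,2)$ is at most $|\<2\>^n| = 2^n$; one only needs that $\functions[D,2]$ is non-empty, which it is (one can realise any graph $D$ on $n$ vertices as the interaction graph of some Boolean network, e.g. by taking suitable sums or parity-type functions on the prescribed in-neighbourhoods).

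The only substantive point is the parenthetical claim that the value is unchanged over infinite alphabets, i.e. that $m$ coincides with the minimum rank when maps $f: A^n \to A^n$ over an arbitrary (possibly infinite) alphabet $A$ with $\IG(f) = D$ are allowed, rank now meaning cardinality of the image. First I would note that, for any finite $q$, the construction in the proof of Lemma~\ref{lem:mr_decreasing} embeds $\functions[D,q]$ into $\functions[D,q']$ for every $q' > q$ (finite or infinite) without increasing the rank, so the infinite-alphabet minimum is $\le m$. For the reverse inequality, I would argue that any $f$ over an infinite alphabet attaining image cardinality $r$ can be ``compressed'' to a finite alphabet: the image $\Ima(f)$ together with enough witnesses of essential dependence on each arc of $D$ involves only finitely many coordinate values, so restricting attention to a finite subalphabet $B \subseteq A$ containing all of these (and collapsing the rest of $A$ onto a single symbol of $B$, as in the lemma) yields $g \in \functions[D,|B|]$ with $\rank(g) \le r$. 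Hence the infinite-alphabet minimum is $\ge \minrank[D,|B|] \ge m$.

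The main obstacle is making this last compression argument fully rigorous: one must verify that after restricting and collapsing the alphabet, the interaction graph is \emph{exactly} $D$ and not a proper subgraph — that is, every essential dependence of $f$ survives. This is handled by ensuring $B$ contains, for each arc $uv \in E$, a pair $a,b$ witnessing that $f_v$ depends on $x_u$ (with $a_{-u}=b_{-u}$ and $f_v(a)\ne f_v(b)$), and that the collapsing map sends these witness values to distinct symbols; since there are finitely many arcs, finitely many extra symbols suffice. Once this is in place, the displayed chain of (in)equalities in Corollary~\ref{cor:mr} follows immediately, so I would keep the write-up short, citing Lemma~\ref{lem:mr_decreasing} for monotonicity and relegating the infinite-alphabet remark to a sentence or two.
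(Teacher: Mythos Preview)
Your argument is correct and matches the paper's approach: the paper states Corollary~\ref{cor:mr} without proof, treating it as an immediate consequence of Lemma~\ref{lem:mr_decreasing}, exactly as you do via the stabilisation of a non-increasing integer sequence. You additionally sketch a justification for the infinite-alphabet remark (which the paper merely asserts); your compression argument is sound, with the minor caveat that ``$\Ima(f)$'' should be read as ``the coordinate values occurring in $\Ima(f)$'' when building the finite subalphabet $B$.
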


We refine Corollary \ref{cor:mr} by giving an upper bound on the smallest $q$ such that $\minrank[D,q] = \minrank[D]$. In particular, this shows that $\minrank[D]$ is computable.

\begin{theorem}
For all $D$ with $n$ vertices and $m$ arcs, $\minrank[D] = \minrank[D, (n+1)m]$.
\end{theorem}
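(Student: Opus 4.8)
The plan is to show that any function $f \in \functions[D,q]$ with $q$ large can be ``compressed'' down to an alphabet of size $(n+1)m$ while keeping the interaction graph equal to $D$ and not increasing the rank; combined with Lemma~\ref{lem:mr_decreasing}, this pins down $\minrank[D]$ at $q = (n+1)m$. So fix $q \ge (n+1)m$ and take $f \in \functions[D,q]$ achieving $\minrank[D,q] = \minrank[D]$ (such a $q$ exists by Corollary~\ref{cor:mr}). The idea is that ``most'' local states $a \in \<q\>$ are interchangeable at every coordinate, so we can merge them.

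\medskip

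More precisely, for each arc $uv \in E$, the fact that $f_v$ depends essentially on $x_u$ is witnessed by a single pair $(a^{uv}, b^{uv})$ differing only in coordinate $u$; call the coordinate-$u$ values involved across all such witnesses (for all arcs leaving $u$, i.e. all $uv \in E$) the \emph{special values} at $u$. Since $u$ has out-degree at most $m$ and each witness contributes at most $2$ values, there are at most $2m$ special values per coordinate --- but to also preserve the images under $f_v$ we must be a little more careful, which is where the factor $n+1$ comes in. The first step is thus to identify, for each coordinate $v \in [n]$, a small set $S_v \subseteq \<q\>$ with $|S_v| \le (n+1)m$ such that (i) $S_v$ contains the $u$-coordinates of all essential-dependence witnesses for every arc $vu \in E$ leaving $v$, and (ii) the restriction of $f$ to $\prod_v S_v$ already realises the full image set $\Ima(f)$, or at least an image set of the same size, and still has interaction graph exactly $D$. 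Then one defines a retraction $r_v : \<q\> \to S_v$ fixing $S_v$ pointwise, sets $\hat x_v = r_v(x_v)$, and lets $\hat f(x) = f(\hat x)$ after relabelling $S_v$ as $\< |S_v| \>$; padding all $S_v$ up to size exactly $(n+1)m$ (using the trick from Lemma~\ref{lem:mr_decreasing} to glue on extra unused symbols without creating or destroying arcs) gives a function in $\functions[D, (n+1)m]$ with rank at most $\minrank[D]$, hence exactly $\minrank[D]$.

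\medskip

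The bookkeeping to make $|S_v| \le (n+1)m$ work is the crux. One clean way: build $S_v$ greedily. Start with the special values forced by condition (i), at most $2m$ of them, actually one can do better --- note that for witnessing the arc $vu$ we only need the two $v$-values $a^{vu}_v, b^{vu}_v$ to be kept distinct and kept in $S_v$, giving at most $2 \dIn\text{-count}$... more carefully, the number of arcs \emph{into} a fixed target is what controls how many $v$-values we must preserve, but summed over all targets $u$ with $vu \in E$ this is at most $2m$ overall across the whole graph, so averaging there is a coordinate... In any case, after the at-most-$O(m)$ forced values, we must additionally ensure the image set does not shrink: each of the $|\Ima(f)| \le q^n$ images is hit by some preimage, and we want a sub-box $\prod_v S_v$ hitting all of them --- but that could need $|S_v|$ up to $q$, so instead we only insist $\hat f$ has the \emph{same rank}, not the same image set, and we pick up at most one new value per coordinate for ``each place a collision could be broken'', which a counting argument bounds by roughly $nm$. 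The main obstacle is precisely proving this counting bound cleanly --- showing that $n+1$ ``layers'' of at most $m$ new symbols each suffice to simultaneously preserve all $m$ essential dependences and keep the rank from dropping spuriously; I expect this to require an inductive or exchange argument over the arcs, processing one arc at a time and charging the (at most) constantly-many new symbols it forces to that arc, with the extra $+1$ factor absorbing the base case / the need to keep distinctness of values already placed.
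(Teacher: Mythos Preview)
Your overall strategy --- pick witness pairs for each arc, keep only the alphabet symbols they use, and show the resulting function over a small alphabet still has interaction graph $D$ and rank at most $\rank(f)$ --- is exactly the paper's. But there is a genuine gap in how you close the argument, and it stems from a wrong diagnosis of what can go wrong.

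You write that after restricting to the sub-box $\prod_v S_v$ you must ``ensure the image set does not shrink'', and you spend the last paragraph trying to bound how many extra symbols this costs. This is backwards. We \emph{want} $\rank(g) \le \rank(f)$; shrinking the image is harmless. The actual obstacle is that your retraction $\hat f(x) = f(\hat x)$ is a map $\prod_v S_v \to \<q\>^n$, not $\prod_v S_v \to \prod_v S_v$: nothing forces $f$ to send the sub-box into itself, so after relabelling $S_v$ as $\<|S_v|\>$ you do not yet have a function over the small alphabet. Trying to enlarge the $S_v$ until $f(\prod_v S_v) \subseteq \prod_v S_v$ is hopeless in general and is what drives your bookkeeping off the rails.

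The paper handles this differently. Besides the set $X_f$ of input values appearing in the witness states (at most $(n+1)m$ of them: each of the $m$ witness pairs contributes $n$ coordinates of $a^{ij}$ plus the single differing coordinate $b^{ij}_i$), it also records the set $Y_f = \{ f_j(a^{ij}), f_j(b^{ij}) : ij \in E \}$ of \emph{output} values on the witnesses, of size at most $2m$. After a permutation one may assume $X_f \cup Y_f \subseteq \<Q\>$ with $Q \le (n+1)m$. Then one simply \emph{truncates the output}: define $g_v(x) = \min(f_v(x), Q-1)$ for $x \in \<Q\>^n$. This forces $g : \<Q\>^n \to \<Q\>^n$, and since $\min$ can only merge output values we get $\rank(g) \le \rank(f)$ for free --- no counting needed. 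The interaction graph is still $D$ because on each witness pair the outputs $f_j(a^{ij}) \ne f_j(b^{ij})$ already lie in $\<Q\>$, so the $\min$ does nothing there. The whole ``inductive or exchange argument over the arcs'' you anticipate is unnecessary once you truncate outputs instead of trying to preserve them.
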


\begin{proof}
Let $D = ([n],E)$ and $f \in \functions[D, q]$; we shall construct $g \in \functions[D, Q]$ with $Q \le (n+1)m$ and $\rank(g) \le \rank(f)$.

For any arc $ij \in E$, let $a := a_f^{ij}, b := b_f^{ij} \in \<q\>^n$ illustrate the influence of $x_i$ on $f_j$, i.e.
\[
	a_i \ne b_i, \qquad a_{-i} = b_{-i}, \qquad f_j(a) \ne f_j(b). 
\]
Let $X_f \subseteq \<q\>$ be the set of all values that appear in all the $a^{ij}, b^{ij}$:
\begin{align*}
	X_f &:= \{ a^{ij}_k :  k \in [n], ij \in E \} \cup \{ b^{ij}_k :  k \in [n], ij \in E \}\\
	&:= \{ a^{ij}_k :  k \in [n], ij \in E \} \cup \{ b^{ij}_i :  ij \in E \}.
\end{align*}
Then $|X_f| \le (n + 1)m$. Similarly, let $Y_f$ be the set of values the $f_j$ functions take on those states:
\[
	Y_f := \{ f_j(a^{ij}) : ij \in E \} \cup \{ f_j(b^{ij}) : ij \in E \}.
\]
Then $|Y_f| \le 2m$. Let $A$ be the larger of these two sets of values, i.e. $A_f = X$ if $|X_f| \ge |Y_f|$ and $A_f = Y_f$ otherwise. Thus $Q := |A_f| \le (n+1)m$. 

By translation and conjugation, we can assume that $f$ is such that $X_f, Y_f \subseteq A_f = \<Q\>$.  More explicitly, we prove the following claim.

\begin{claim}
There exists $h \in \functions[D, q]$ such that
\begin{enumerate}
	\item for all $ij \in E$, $h_j( a_h^{ij} ) \ne h_j( b_h^{ij} )$,

	\item $X_h, Y_h \subseteq A_h = \<Q\>$,

	\item $\rank(h) = \rank(f)$.
\end{enumerate}
\end{claim}

\begin{proof}
Suppose that $|X_f| \ge |Y_f|$, so that $A_f = X_f$; the proof is similar in the other case. Let $\rho, \sigma$ be two permutations of $\<q\>$ such that $\pi( Y_f ) \subseteq X_f$ and $\sigma( X_f ) = \<Q\>$. We let $\pi$ and $\sigma$ act on $\<q\>^n$ componentwise. Then let $h \in \functions[D,q]$ be defined as
\[
	h = \sigma \circ \rho \circ f \circ \sigma^{-1},
\]
and let $a^{ij}_h = \sigma( a^{ij}_f )$ and $b^{ij}_h = \sigma( b^{ij}_f )$. It is easy to verify that $h$ satisfies the claim.
\end{proof}

Let $g \in \functions(n,Q)$ be defined as follows. For all $v \in [n]$ and all $x \in \<Q\>^n$,
\[
	g_v(x) = \min ( h_v(x), Q-1 ).
\]
Since $g_j(a^{ij}) = h_j( a^{ij} )$ and $g_j(b^{ij}) = h_j(b^{ij})$ for all $ij \in E$, we obtain that the interaction graph of $g$ is $D$. Moreover, the rank of $g$ is clearly no more than that of $f$. 
\end{proof}

\subsection{Canonical interaction graphs}

In this section, we show that the minimum rank $\minrank[D,q]$ can be studied by converting the graph $D$ into a canonical form. First of all, it is clear that if $D$ is not connected, say its connected components are $D_1, \dots, D_k$, then $\minrank[D,q] = \prod_{i=1}^k \minrank[D_i,q]$. Thus, let $D = (V, E)$ be a connected graph. The canonical version of $D$, denoted as $\mathrm{C}(D)$, is obtained as follows.
\begin{enumerate}
	\item Let $V' = V_0 \cup V_1$, where $V_0 = \{v_0 : v \in V \}$ and $V_1 := \{ v_1 : v \in V \}$ are copies of $V$. Let $E' = \{ u_0v_1 : uv \in E \}$. Then let $D' = (V', E')$.

	\item Say a vertex $v_1 \in V_1$ of $D'$ is redundant if there exists $S \subseteq V_1 \setminus v_1$ such that $\NIn(S) = \NIn(v_1)$ and that if $S$ contains exactly one non-isolated vertex $u_1$, then $u < v$. Then let $R_1$ be the set of redundant vertices of $D'$ and $D'' = D' \setminus R_1$.

	\item Say a vertex $v_0 \in V_0$ of $D''$ is redundant if either $v_0$ is isolated, or there exists $u < v$ such that $\NOut(u_0) = \NOut(v_0)$. Then let $R_0$ be the set of redundant vertices of $D''$ and $\mathrm{C}(D) = D'' \setminus R_0$.
\end{enumerate}

\begin{lemma} \label{lem:canonical}
For any $D$ and $q$, we have $\minrank[D,q] = \minrank[\mathrm{C}(D), q]$.
\end{lemma}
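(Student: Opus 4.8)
The plan is to prove the three equalities $\minrank[D,q]=\minrank[D',q]=\minrank[D'',q]=\minrank[\mathrm{C}(D),q]$, one for each construction step. It helps to observe that each of $D'$, $D''$, $\mathrm{C}(D)$ is bipartite with its $V_0$-part consisting of sources and its $V_1$-part of sinks, and that for such a graph $B$ an FDS $g\in\functions[B,q]$ amounts to a constant $c\in\<q\>^{V_0}$ together with a transition map $\psi\colon\<q\>^{V_0}\to\<q\>^{V_1}$ in which $\psi_{v_1}$ depends essentially on exactly $\NIn(v_1)\subseteq V_0$; since $\Ima(g)=\{c\}\times\Ima(\psi)$ we have $\rank(g)=|\Ima(\psi)|$. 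With this dictionary Step 1 is immediate: $g\in\functions[D',q]$ is the data of a constant on $V_0$ and an $f\in\functions[D,q]$ sitting on $V_1$, with $\rank(g)=\rank(f)$.

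For each of the remaining two equalities I would prove the two inequalities separately. In all four cases the rank bound itself is automatic: the transition map of the smaller (fewer-arcs) side is obtained from that of the larger side by post-composing with a coordinate projection (deleting a sink) or pre-composing with a substitution on the domain (fixing, linking, or unlinking domain coordinates), and either operation can only shrink $|\Ima(\psi)|$; dually the larger-side map is reconstructed from the smaller one with the image unchanged. So the whole content is that the interaction graph comes out exactly equal to the target graph — straightforward when the operation only deletes arcs, delicate when it must create arcs without creating or destroying others. In Step 2: for $\minrank[D'',q]\le\minrank[D',q]$, delete the codomain coordinates $R_1$ from a minimum-rank $\psi$ on $D'$ (surviving coordinates are unchanged, their in-neighbourhoods lie in the untouched $V_0$, so the interaction graph is exactly $D''$); for $\minrank[D',q]\le\minrank[D'',q]$, given a minimum-rank $\psi$ on $D''$, re-introduce each sink $v_1\in R_1$ as $\psi'_{v_1}(x):=H_{v_1}\bigl((\psi_{w_1}(x))_{w_1\in S}\bigr)$, where $S\subseteq V_1\setminus R_1$ is a minimal witness of the redundancy of $v_1$ (that one exists is a short combinatorial argument using the tie-break clause of the definition) and $H_{v_1}$ is chosen so that $\psi'_{v_1}$ depends essentially on exactly $\bigcup_{w_1\in S}\NIn(w_1)=\NIn(v_1)$; then $\psi'_{v_1}(x)$ is a function of $\psi(x)$, so the image is unchanged.

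In Step 3, deleting an isolated source of $D''$ is inert; otherwise $v_0$ is deleted because $\NOut(v_0)=\NOut(u_0)$ with $u<v$. For $\minrank[D'',q]\le\minrank[\mathrm{C}(D),q]$, given a minimum-rank $\psi$ on $\mathrm{C}(D)$, in every coordinate $\psi_{w_1}$ with $w_1\in\NOut(u_0)$ substitute $x_{u_0}+x_{v_0}\bmod q$ for the argument $x_{u_0}$; this pre-composition does not enlarge the image, creates exactly the arcs from $v_0$ to $\NOut(v_0)$, and destroys no existing arc, so the interaction graph becomes $D''$. For $\minrank[\mathrm{C}(D),q]\le\minrank[D'',q]$, given a minimum-rank $\psi$ on $D''$, substitute for $x_{v_0}$ a suitable map $\Lambda(x_{u_0})$ into the pair of coordinates $(x_{u_0},x_{v_0})$ — one may, and sometimes must, also collapse the $u_0$-coordinate, not merely set $x_{v_0}$ to a function of $x_{u_0}$ — chosen so that every $\psi_{w_1}$ with $w_1\in\NOut(u_0)$ still depends essentially on $x_{u_0}$ and on all of $\NIn(w_1)\setminus\{v_0\}$.

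The main obstacle is exactly the interaction-graph verification in the two arc-creating cases (re-introducing a sink in Step 2; deleting the source $v_0$ in Step 3): one must rule out that the recombination $H_{v_1}$ or the substitution $\Lambda$ silently cancels an essential dependence. This is where minimality of $\psi$ is genuinely used and not cosmetic. For an arbitrary FDS the construction can fail: over $q=2$, on the reduced bipartite graph with sources $u_0,v_0,s_0,t_0$ and sinks $w,w'$ with $\NIn(w)=\{u_0,v_0,s_0\}$, $\NIn(w')=\{u_0,v_0,t_0\}$, the map $\bigl((x_{u_0}\wedge x_{v_0})\oplus x_{s_0},\,(x_{u_0}\wedge\bar x_{v_0})\oplus x_{t_0}\bigr)$ admits no substitution of $x_{v_0}$ keeping both coordinates dependent on $x_{u_0}$, yet it has rank $4$, above the minimum. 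So the crux is to show that a minimum-rank $\psi$ is efficient enough that a good $H_{v_1}$ (resp.\ $\Lambda$) must exist — most naturally by contraposition, showing that a $\psi$ admitting no good choice can be modified into one of strictly smaller rank, or by first replacing $\psi$ with a sufficiently generic minimum-rank representative. Everything else — tracking which arcs of the bipartite graphs survive each operation and checking that the constructed maps have the claimed images — is routine bookkeeping.
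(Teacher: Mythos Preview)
Your three-step plan matches the paper's exactly. The paper, however, is far terser: for each of Steps~2 and~3 it writes down a single explicit construction, always from the graph with fewer vertices to the one with more, and simply asserts that the result lies in the correct $\functions[\cdot,q]$ with the same rank. In Step~2 it re-introduces the redundant sink $v_1$ by the fixed formula $f'_{v_1}(x)=1\iff f''_s(x)=1$ for all $s\in S$, and in Step~3 it re-introduces the redundant source by substituting $x_{u_0}\wedge x_{v_0}$ for $x_{u_0}$; minimality of the starting FDS is never invoked. The reverse inequality in Step~3, which you single out as the crux ($\minrank[C,q]\le\minrank[D'',q]$), is not treated in the paper at all.

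You are right that the interaction-graph checks deserve care --- in fact the paper's Step-2 formula can fail to give interaction graph $D'$ (take $q=2$, $\NIn(s_1)=\{a,b\}$, $\NIn(s_2)=\{a,c\}$, $f''_{s_1}=\neg x_a\wedge\neg x_b$, $f''_{s_2}=x_a\wedge\neg x_c$; then the paper's $f'_{v_1}$ is identically~$0$). So your instinct to look for a more robust $H_{v_1}$ is sound. But your proposed fix --- that a \emph{minimum-rank} $\psi$ must admit a suitable $H_{v_1}$ or $\Lambda$, to be shown by contraposition --- is asserted, not proved; and your rank-$4$ example does not demonstrate that minimality is the relevant hypothesis, since with the general substitution $\Lambda\colon\<q\>\to\<q\>^2$ that you yourself allow, the choice $\Lambda(0)=(1,0)$, $\Lambda(1)=(1,1)$ already works for it. In short: same skeleton as the paper, more honest about where the difficulties lie, but the substantive step you isolate --- existence of a good $H_{v_1}$ or $\Lambda$ --- remains open in your proposal (and, as you have in effect noticed, arguably in the paper too).
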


\begin{proof}
The proof follows the three steps of the construction of $C := \mathrm{C}(D)$.
\begin{enumerate}
	\item $\minrank[D, q] = \minrank[D', q]$. 
	This holds since, for any choice of $c \in \<q\>^n$ and any $f \in \functions[D,q]$, the image of $f' \in \functions[D', q]$ defined by $f'_{V_0}(x) = c$ and $f'_{V_1}(x_{V_0}) = f(x_{V_0})$ has the same rank as $f$.

	\item $\minrank[D', q] = \minrank[D'', q]$. Let $v_1$ be redundant, $\NIn(S) = \NIn(v_1)$, and $f'' \in \functions[D'', q]$. Defining $f'(x)$ by $f'_u(x) = f''_u(x)$ for all $u \ne v_1$ and
	\[
		f'_{v_1}(x) = \begin{cases}
		0 & \text{if } \exists s \in S : f''_s(x) \ne 1\\
		1 & \text{otherwise},
		\end{cases} 
	\]
	then we see that $\rank(f') = \rank(f'')$.

	\item $\minrank[C,q] = \minrank[D'',q]$. Let $\NIn(u_0) = \NIn(v_0)$ and $\tilde{f} \in \functions[C, q]$. Let $f'' \in \functions[D'', q]$ be defined by $f''_v(x) = \tilde{f}(\tilde{x})$, where $\tilde{x}_v = x_v$ if $v \ne v_0$ and $\tilde{x}_{v_0} =  x_{u_0} \land x_{v_0}$, then once again $\rank(f'') = \rank(\tilde{f})$.
\end{enumerate}
\end{proof}

%
%

We now give bounds on $\minrank[C]$ for any canonical graph $C = (V,E).$ We denote the set of sources of $C$ as $A = \{ a_1, \dots, a_m \}$ and the set of sinks of $C$ as $B = \{b_1, \dots, b_n\}$. Let $L(C)$ be the maximum size of a sequence of vertices such that the in-neighbourhood of the $i$-th vertex is not contained in the in-neighbourhood of the previous vertices in the sequence, plus one:
\[
	L(C) = \max \{ k : \exists b_{j_1}, \dots, b_{j_k} \in B, \NIn(b_{j_l}) \not\subseteq \NIn(\{b_{j_1}, \dots, b_{j_{l-1}} \}) \forall 1 \le i \le k \} + 1.
\]
From $C = (A \cup B, E)$, construct the simple graph $G$ on $B$, where two vertices $b_j, b_{j'} \in B$ are adjacent if and only if $\NIn(b_j) \cap \NIn(b_{j'}) \ne \emptyset$. Let $U(C)$ denote the number of independent sets of $G$.

\begin{proposition} \label{prop:bounds_mr}
For any canonical graph $C$, 
\[
	L(C) \le \minrank[C] \le U(C).
\]
\end{proposition}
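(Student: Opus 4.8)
The plan is to exploit the very rigid shape of a canonical graph $C$. Write $V(C)=A\cup B$ with $A$ the set of sources and $B$ the set of sinks; by construction every arc of $C$ goes from $A$ to $B$, every source has a non‑empty out‑neighbourhood, and every sink $b$ has $\NIn(b)\neq\emptyset$ (isolated vertices having been discarded in the canonical construction). Hence for any $f\in\functions[C,q]$, each $f_a$ with $a\in A$ is constant, and each $f_b$ with $b\in B$ depends essentially on exactly $\NIn(b)\subseteq A$; in particular $f_b$ is a function of $x_A$ alone. Writing $\phi_f=(f_b)_{b\in B}\colon\<q\>^A\to\<q\>^B$ and letting $c_A$ be the constant value of $f_A$, we get $\Ima(f)=\{c_A\}\times\Ima(\phi_f)$, so $\rank(f)=|\Ima(\phi_f)|$. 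The statement therefore reduces to bounding $|\Ima(\phi_f)|$ between $L(C)$ and $U(C)$: for every $q$ on the lower side, and for one well‑chosen $q$ on the upper side (recall $\minrank[C]=\min_q\minrank[C,q]$).

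For the lower bound, fix $f\in\functions[C,q]$ and a sequence $b_{j_1},\dots,b_{j_{L-1}}$ realising $L=L(C)$, so $\NIn(b_{j_l})\not\subseteq\NIn(\{b_{j_1},\dots,b_{j_{l-1}}\})$ for each $l$. I would prove by induction on $l$ that the partial map $g^{(l)}=(f_{b_{j_1}},\dots,f_{b_{j_l}})$ satisfies $|\Ima(g^{(l)})|\ge l+1$. For the inductive step, pick a source $a_l\in\NIn(b_{j_l})\setminus\NIn(\{b_{j_1},\dots,b_{j_{l-1}}\})$; then none of the components of $g^{(l-1)}$ depends on $x_{a_l}$, whereas $f_{b_{j_l}}$ does, so there exist $y,y'$ agreeing off coordinate $a_l$ with $g^{(l-1)}(y)=g^{(l-1)}(y')$ but $f_{b_{j_l}}(y)\neq f_{b_{j_l}}(y')$. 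The coordinate projection $\Ima(g^{(l)})\to\Ima(g^{(l-1)})$ is surjective with all fibres non‑empty, and the fibre over $g^{(l-1)}(y)$ has size at least two, whence $|\Ima(g^{(l)})|\ge|\Ima(g^{(l-1)})|+1\ge l+1$. Taking $l=L-1$ yields $\rank(f)=|\Ima(\phi_f)|\ge|\Ima(g^{(L-1)})|\ge L$; as $f$ and $q$ were arbitrary, $\minrank[C]\ge L(C)$.

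For the upper bound I would exhibit one explicit $f$. Use the alphabet $\<Q\>$ with $Q=\max(2,|B|+1)$; reserve the symbol $0$ and assign a distinct non‑zero symbol $\langle b\rangle$ to each $b\in B$. Put $f_a\equiv 0$ for $a\in A$, and for $b\in B$ set $f_b(x)=1$ if $x_a=\langle b\rangle$ for every $a\in\NIn(b)$, and $f_b(x)=0$ otherwise — think of each source ``voting'' for at most one sink, a sink ``firing'' precisely when all its in‑neighbours vote for it. Since $\NIn(b)\neq\emptyset$, $f_b$ depends essentially on each of its in‑neighbours and on nothing else, so $\IG(f)=C$. Now $\phi_f(x)$ is the indicator vector of $I(x):=\{b:f_b(x)=1\}$, hence $\rank(f)=|\{I(x):x\in\<Q\>^A\}|$. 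If distinct $b,b'$ both lie in $I(x)$ then $\NIn(b)\cap\NIn(b')=\emptyset$, for a common in‑neighbour $a$ would force $\langle b\rangle=x_a=\langle b'\rangle$; thus every $I(x)$ is an independent set of $G$, so $\rank(f)\le U(C)$ and $\minrank[C]\le\minrank[C,Q]\le U(C)$.

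The genuinely creative ingredient is the upper‑bound construction: choosing the ``voting'' local functions and an alphabet large enough to name all sinks, so that the realisable firing patterns land inside the independent sets of $G$. The lower bound, by contrast, is a routine induction once the canonical form has collapsed $f$ to $\phi_f$. The only points to keep straight are that the source coordinates contribute a single frozen block to $\Ima(f)$ (so only sink coordinates matter), that in $C$ every sink has a non‑empty in‑neighbourhood and every source a non‑empty out‑neighbourhood, and that the degenerate cases ($B=\emptyset$, or $C$ a single arc) are consistent with $L(C)=U(C)=\minrank[C]$.
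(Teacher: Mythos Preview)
Your proof is correct and follows essentially the same approach as the paper: the lower bound is the same induction (each new sink $b_{j_l}$ contributes a source $a_l$ on which only $f_{b_{j_l}}$ depends, forcing one fibre of the projection to have size at least two), and the upper bound is the same ``voting'' construction (the paper uses alphabet $\<n\>$ with label $j-1$ for $b_j$, you use $\<|B|+1\>$ reserving $0$, which is immaterial). The only cosmetic difference is that the paper verifies $\rank(f)=U(C)$ exactly while you only check $\rank(f)\le U(C)$, but the inequality is all that the proposition requires.
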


\begin{proof}
We first prove the lower bound. For any $S \subseteq B$, we denote $\r(S) := \minrank[ C[A \cup S] ]$.

\begin{claim}
For any $S \subseteq B$ and any $b \in B$ such that $\NIn(b) \not\subseteq \NIn(S)$, 
\[
	\r(S \cup b) \ge \r(S) + 1.
\] 
\end{claim}

\begin{proof}
Let $N = \NIn(S)$ and $P = \NIn(b) \setminus N$; by hypothesis, $P$ is not empty. Choose a function $f \in \functions[D,q]$ and $p \in P$ and let $y,z \in \<q\>^{|A|}$ illustrate the influence of the arc $pb$: $y_{-p} = z_{-p}$ and $f_b(y) \ne f_b(z)$. We then have
\[
	\Ima(f_{S \cup b}) \supseteq \{ ( f_S(x_N), f_b(x_N, y_P) ) : x_N \in \<q\>^{|N|} \} \cup \{ ( f_S(z_N), f_b(z_N, z_P) ) \}.
\]
The first set has size $\rank(f_S)$ and both sets are disjoint, thus $\rank(f_{S \cup b}) \ge \rank(f_S) + 1$. This proves the claim. 
\end{proof}

Since $\r(\emptyset) = 1$, applying the claim recursively with $S = \{b_{j_1}, \dots, b_{j_l}\}$ and $b = b_{j_{l+1}}$ for $0 \le l \le k-1$ yields the lower bound.

We now prove the upper bound. Consider $f \in \functions[C, n]$ given by
\begin{alignat*}{2}
	f_{a_i}(x) &:= 0 & \qquad & \forall a_i \in A,\\
	f_{b_j}(x) &:= \begin{cases}
	1 & \text{if } x_{\NIn(b_j)} = (j-1, \dots, j-1)\\
	0 & \text{otherwise}
	\end{cases} & \qquad & \forall b_j \in B.
\end{alignat*}
For any $y \in \{0,1\}^{m + n}$, its support is $\{v \in V : y_v = 1\}$. We prove that the image of $f$ consists of all $y \in \{0,1\}^{m + n}$ such that the support of $y$ is an independent set of $G$. Indeed, if the support of $y$, say $b_{j_1}, \dots, b_{j_k}$, is an independent set of $G$, then $y = f(x)$, with $x_{\NIn(b_{j_l})} = (l-1, \dots, l-1)$. Conversely, suppose the support of $y$ is not an independent set of $G$. If $y_a = 1$ for some $a \in A$, then clearly $y \notin \Ima(f)$. Otherwise, there exist $b_j, b_{j'} \in B$ such that $y_{b_j} = y_{b_{j'}} = 1$ and $\NIn(b_j) \cap \NIn(b_{j'}) \ne \emptyset$. If $y = f(x)$, then for any $a \in \NIn(b_j) \cap \NIn(b_{j'})$, we must have $x_a = j-1$ and $x_a = j' - 1$, which is a contradiction.
\end{proof}

We can characterise exactly when these bounds meet. Firstly, the transitive tournament with loops on $n$ vertices is $T_n = ([n], E)$ with $E = \{ ij : i \le j \}$. Let us define a family of graphs generalising $T_n$. Let $\mathcal{T}_n$ be the family of graphs $D = (L \cup R, E)$ such that: $|R| = n$; for any $r \in R$, there is a loop on $r$; and for any $r,r' \in R$, either $rr' \in E$ or there is $l \in L$ such that $lr, lr' \in E$. In particular, if $L$ is empty, then we obtain a graph isomorphic to $T_n$. We note that if $H \in \mathcal{T}_n$, then 
\[
	L(\mathrm{C}(H)) = \minrank[H,n] = \minrank[H] = U(\mathrm{C}(H)) = n+1.
\]

\begin{proposition} \label{prop:matching_bounds}
For a canonical graph $C = (A \cup B, E)$ with $|B| = n$, 
\[
	L(C) = \minrank[C] = U(C)
\]
if and only if $C$ is isomorphic to $\mathrm{C}(H)$ for some $H \in \mathcal{T}_n$.
\end{proposition}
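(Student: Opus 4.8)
The plan is to prove both implications by analyzing when the lower-bound chain in Proposition~\ref{prop:bounds_mr} is tight. Recall $L(C)$ counts (plus one) the longest sequence of sinks whose in-neighbourhoods keep being "new", while $U(C)$ counts independent sets of the conflict graph $G$ on $B$. Since $L(C)\le\minrank[C]\le U(C)$, equality everywhere forces $L(C)=U(C)$; so the combinatorial heart of the argument is: \emph{when does the number of independent sets of $G$ equal the length (plus one) of the longest "in-neighbourhood-increasing" sequence of sinks?}

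For the easy direction, I would start from $H\in\mathcal{T}_n$ and verify the displayed chain $L(\mathrm{C}(H))=\minrank[H,n]=\minrank[H]=U(\mathrm{C}(H))=n+1$ claimed just before the proposition; once that is in place, the "if" direction is immediate since $\minrank$ is invariant under canonicalisation (Lemma~\ref{lem:canonical}) and $C\cong\mathrm{C}(H)$. Concretely: every $r\in R$ carries a loop, so in the doubled graph the corresponding sink has $r_0$ in its in-neighbourhood, and for distinct $r,r'$ the condition "$rr'\in E$ or a common in-neighbour $l$" guarantees $\NIn(b_r)\cap\NIn(b_{r'})\ne\emptyset$ after canonicalisation — hence $G$ is a complete graph on $n$ vertices, giving $U=n+1$. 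On the other side, because every sink has a loop the in-neighbourhoods strictly increase along any ordering $r_1,\dots,r_n$, yielding $L=n+1$ as well, and the redundancy-reduction steps ensure no sink or source is lost.

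For the hard direction, assume $L(C)=U(C)$ for a canonical $C=(A\cup B,E)$ with $|B|=n$. The key structural claim I would aim for is that $G$ must be a complete graph on $B$, i.e.\ every two sinks share an in-neighbour. The reason: if some pair $b_j,b_{j'}$ is non-adjacent in $G$, then $\{b_j\},\{b_{j'}\},\{b_j,b_{j'}\},\emptyset$ are four independent sets, and more generally non-adjacency lets independent sets "branch", so $U(C)$ exceeds the length of any single chain — one has to show that the longest increasing chain has length exactly $|B|$ whenever $L(C)=U(C)=n+1$, which pins $G$ down to $K_n$. Granting $G=K_n$: $U(C)=n+1$; and $L(C)=n+1$ means there is an ordering $b_1,\dots,b_n$ of \emph{all} sinks with $\NIn(b_\ell)\not\subseteq\NIn(\{b_1,\dots,b_{\ell-1}\})$. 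Pick for each $\ell$ a private in-neighbour $a_\ell\in\NIn(b_\ell)\setminus\NIn(\{b_1,\dots,b_{\ell-1}\})$; these $a_\ell$ together with the remaining sources form the set $L$, and I would then set $R:=\{b_1,\dots,b_n\}$ and check the three defining properties of $\mathcal{T}_n$: the loop on each $r$ comes from the canonical form (a sink with no loop-like self in-neighbour would be redundant or would break the chain condition), and for each pair the common in-neighbour forced by $G=K_n$ is exactly the witness $l\in L$ required. Finally I must verify $C$ is already in canonical form relative to this $\mathcal{T}_n$-structure, so that $C\cong\mathrm{C}(H)$ for $H=(L\cup R,E)\in\mathcal{T}_n$.

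The main obstacle I expect is the counting lemma at the core of the hard direction: showing rigorously that $U(C)=L(C)$ forces $G=K_n$ and simultaneously forces $L(C)=n+1$ (rather than some smaller value with a compensating small $U$). This requires a clean argument that, for any graph $G$ on $n$ vertices, the number of independent sets is at least $n+1$, with equality exactly for $K_n$, \emph{and} that the longest increasing chain of in-neighbourhoods has length $\le n$ with the extremal case corresponding precisely to loops everywhere — reconciling the graph-theoretic extremal statement on $G$ with the hypergraph-theoretic statement on the $\NIn(b_j)$'s. A secondary, more technical nuisance will be bookkeeping the canonicalisation: making sure the reduction steps (removing redundant $v_1$'s and $v_0$'s) neither destroy loops nor merge sinks in a way that changes whether $C$ sits in the image of $\mathrm{C}$ restricted to $\mathcal{T}_n$.
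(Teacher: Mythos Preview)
Your approach is essentially the paper's, and the outline is correct. Two remarks will let you turn it into a clean proof.

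First, the ``main obstacle'' you anticipate is in fact immediate. In any canonical $C$ with $|B|=n$, the empty set and the $n$ singletons are independent in $G$, so $U(C)\ge n+1$; and any sequence of sinks has length at most $n$, so $L(C)\le n+1$. Hence $L(C)=U(C)$ automatically pins both to $n+1$, and then $U(C)=n+1$ forces $G=K_n$ while $L(C)=n+1$ gives the full-length increasing chain. There is no further counting lemma to prove.

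Second, your account of where the loops come from is slightly off. The canonical graph $C$ is bipartite between sources $A$ and sinks $B$; no sink carries a loop in $C$. The loops in $H$ arise because, having chosen private in-neighbours $a_1,\dots,a_n\in A$ (one for each $b_j$ in the chain), you \emph{identify} $a_j$ with $b_j$ to form the vertex $r_j\in R$: the arc $a_jb_j$ of $C$ becomes the loop on $r_j$, the arcs $a_ib_j$ become $r_ir_j$, and the remaining sources $a_{n+1},\dots,a_m$ form $L$. With $H$ built this way one checks $H\in\mathcal{T}_n$ (the common in-neighbour of any pair $b_j,b_{j'}$ in $C$, guaranteed by $G=K_n$, supplies the required witness) and $\mathrm{C}(H)=C$. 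This is exactly the construction in the paper's proof.
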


\begin{proof}
We have $U(C) \ge n + 1$, with equality if and only if $\NIn(b) \cap \NIn(b') \ne \emptyset$ for all $b, b' \in B$. Conversely, $L(C) \le n + 1$, with equality if we can sort the vertices of $B$ so that $\NIn(b_j) \not\subseteq \NIn(\{b_1, \dots, b_{j-1}\})$ for all $j$. Thus, if $L(C) = U(C)$, both conditions must hold, and in particular $m \ge n$. Sort the vertices of $A$ such that $a_j \in \NIn(b_j) \setminus \NIn(\{b_1, \dots, b_{j-1}\})$ for all $1 \le j \le k$ and consider the graph $H$ on $L \cup R$ such that $R$ has a loop on each vertex, $r_ir_j \in E$ if and only if $a_ib_j$ is an arc in $C$ for all $1 \le i,j \le n$, and $l_ir_j$ is an arc if and only if $a_{n+i}b_j$ is an arc for all $1 \le i \le m-n$ and $1 \le j \le E$. Then $H \in \mathcal{T}_n$ and $\mathrm{C}(H) = C$.
\end{proof}


We can refine the lower bound as follows. The function $\r(S)$ defined in the proof of Proposition \ref{prop:bounds_mr} satisfies not only $\r(S \cup b) \ge \r(S) + 1$ if $\NIn(b) \not\subseteq \NIn(S)$ and the initial condition $\r(\emptyset) = 1$, but also
\[
	\r(S \cup T) = \r(S) \cdot \r(T) \qquad \forall S,T \subseteq B, \NIn(S) \cap \NIn(T) = \emptyset.
\]
The proof of that equality is straightforward. We obtain that $\minrank[C, q] \ge L'(C)$, where $L'(C)$ is the optimal solution of the following minimisation problem, over all functions $r : 2^B \to \mathbb{N}$:
\begin{alignat*}{3}
	\min 			& \qquad & r(B) 		 \\
	\text{s.t. } 	& \qquad & r(\emptyset)	&= 1\\
					& \qquad & r(S \cup b) 	&\ge r(S) + 1 		& \qquad & \text{if } S \subseteq B, \NIn(b) \not\subseteq \NIn(S) \\
					& \qquad & r(S \cup T) 	&= r(S) \cdot r(T) 	& \qquad & \text{if } S,T \subseteq B, \NIn(S) \cap \NIn(T) = \emptyset.
\end{alignat*}

For any $D$, the conjunctive network on $D$ is $f \in \functions[D,2]$ such that for all $v \in [n]$, 
\[
	f_v(x) = \bigwedge_{u \in \NIn(v)} x_u,
\]
where an empty conjunction is equal to $1$. The rank of the conjunctive network on $D$ is denoted $\crank[D]$. By adapting the proof of Lemma \ref{lem:canonical}, we easily obtain $\crank[D] = \crank[\mathrm{C}(D)]$. We can show that the upper bound $U(C)$ is not always reached. For instance, let $C$ be as in Figure \ref{fig:tilde}. Then it is easy to verify that $U(C) = 8$, while $\crank[C] = 7$.

\begin{figure}
\centering
\begin{tikzpicture}[scale = 2]
	\begin{scope}[xshift = 0.5cm]
		\node[draw, shape=circle] (a1) at (1,1) {$a_1$};
		\node[draw, shape=circle] (a2) at (2,1) {$a_2$};
		\node[draw, shape=circle] (a3) at (3,1) {$a_3$};
	\end{scope}

	\node[draw, shape=circle] (b1) at (1,0) {$b_1$};
	\node[draw, shape=circle] (b2) at (2,0) {$b_2$};
	\node[draw, shape=circle] (b3) at (3,0) {$b_3$};
	\node[draw, shape=circle] (b4) at (4,0) {$b_4$};
	
	\draw[-latex] (a1) -- (b1);
	\draw[-latex] (a1) -- (b2);
	\draw[-latex] (a2) -- (b2);
	\draw[-latex] (a2) -- (b3);
	\draw[-latex] (a3) -- (b3);
	\draw[-latex] (a3) -- (b4);
\end{tikzpicture}
\caption{A canonical graph where $\crank[C] < U(C)$.} \label{fig:tilde}
\end{figure}

We now classify graphs with minimum rank $1$, $2$, or $2^n$. The first classification result is straightforward, but we include it as a template for the following results.

\begin{proposition} \label{prop:1}
For any graph $D$, the following are equivalent.
\begin{enumerate}[label=(\alph*)]
	\item $\crank[D] = 1$.

	\item $\minrank[D,2] = 1$.

	\item $\minrank[D] = 1$.

	\item $D$ is empty.

	\item $\mathrm{C}(D)$ is empty.
\end{enumerate}
\end{proposition}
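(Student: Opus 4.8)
The plan is to prove the cycle of implications $(d)\Rightarrow(a)\Rightarrow(b)\Rightarrow(c)\Rightarrow(d)$ together with $(d)\Leftrightarrow(e)$, since $(e)$ is essentially a restatement of $(d)$. First I would observe that $D$ is empty (no arcs) if and only if every local function $f_v$ depends on no variable, i.e. every $f_v$ is constant, and hence $\mathrm{C}(D)$ is obtained from a graph with no arcs, which by the construction in Section~\ref{sec:background} is itself empty; conversely if $\mathrm{C}(D)$ is empty then since $\minrank[D,q]=\minrank[\mathrm{C}(D),q]$ by Lemma~\ref{lem:canonical} and the empty graph trivially has minimum rank $1$, we must have $\minrank[D]=1$, and one checks directly that only the empty $D$ canonicalises to the empty graph. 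This handles $(d)\Leftrightarrow(e)$.

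For $(d)\Rightarrow(a)$: if $D$ is empty then the conjunctive network on $D$ has $f_v(x)=\bigwedge_{u\in\emptyset}x_u=1$ for every $v$, so $f$ is the constant map to the all-ones state, and $\crank[D]=1$. Then $(a)\Rightarrow(b)$ is immediate because the conjunctive network lies in $\functions(D,2)$; more precisely, if $\crank[D]=1$ then there is a Boolean FDS with interaction graph contained in $D$ and rank $1$, but I actually want a witness in $\functions[D,2]$ — here I would note that $\crank[D]=1$ forces every $f_v$ to be the constant $1$, which forces $\NIn(v)=\emptyset$ for all $v$, hence $D$ is empty, and then the constant FDS belongs to $\functions[D,2]$ and has rank $1$, giving $\minrank[D,2]=1$. (In fact this shows $(a)\Rightarrow(d)$ directly, which is cleaner; I would route the implications accordingly.) Next $(b)\Rightarrow(c)$ is trivial from Corollary~\ref{cor:mr}, which gives $\minrank[D]\le\minrank[D,2]$ and $\minrank[D]\ge 1$ always.

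The one implication requiring a genuine argument is $(c)\Rightarrow(d)$: if $\minrank[D]=1$ then $D$ is empty. Suppose for contradiction that $D$ has an arc $uv$. Take any $f\in\functions[D,q]$ realising $\rank(f)=\minrank[D]=1$; then $\Ima(f)=\{c\}$ is a single state, so in particular $f_v$ is constant, meaning $f_v$ does not depend essentially on $x_u$, contradicting $uv\in E=E(\IG(f))$. Hence $D$ has no arcs. The only subtlety is that the minimum in the definition of $\minrank[D]$ is attained at some finite $q$ — guaranteed by the Theorem preceding Proposition~\ref{prop:1} (or already by Corollary~\ref{cor:mr} together with the fact that $\functions[D,q]$ is finite for each $q$) — so that such an $f$ genuinely exists. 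I expect no serious obstacle here; the main thing to get right is the bookkeeping of which witness FDS lives in $\functions[D,q]$ versus $\functions(D,q)$, and to note explicitly that rank $1$ forces every coordinate function to be constant, which is the crux linking all five statements.
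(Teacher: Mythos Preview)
Your argument is correct. The paper does not actually supply a proof of Proposition~\ref{prop:1}; it simply calls the result ``straightforward'' and includes it as a template for Theorems~\ref{th:2} and~\ref{th:2n}, so there is nothing to compare against beyond noting that your reasoning---in particular the observation that $\rank(f)=1$ forces every coordinate $f_v$ to be constant and hence $\IG(f)$ to have no arcs---is exactly what is needed. One minor point: your hesitation about whether the conjunctive network belongs to $\functions[D,2]$ or only to $\functions(D,2)$ is unwarranted, since a nonempty Boolean conjunction depends essentially on each of its variables, so the conjunctive network on $D$ always has interaction graph exactly $D$ and directly witnesses $\minrank[D,2]\le\crank[D]$.
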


\begin{theorem} \label{th:2}
For any graph $D$, the following are equivalent.
\begin{enumerate}[label=(\alph*)]
	\item \label{it:2c} $\crank[D] = 2$.

	\item \label{it:22} $\minrank[D,2] = 2$.

	\item \label{it:2q} $\minrank[D] = 2$.

	\item \label{it:2D} There exists a set $S$ such that for all $v$, either $v$ is a source or $\NIn(v) = S$. 

	\item \label{it:2t} $\mathrm{C}(D) \cong \vec{P}_1$.
\end{enumerate}
\end{theorem}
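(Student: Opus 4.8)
The plan is to establish the cycle of implications $(a) \Rightarrow (c) \Rightarrow (b) \Rightarrow (d) \Rightarrow (e) \Rightarrow (a)$, leaning on the canonical-graph machinery of Lemma \ref{lem:canonical} and the bounds of Proposition \ref{prop:bounds_mr} wherever possible, so that most of the work reduces to combinatorics on the canonical graph $\mathrm{C}(D)$. Several implications are immediate: since the conjunctive network is a particular member of $\functions[D,2]$, we have $\crank[D] \ge \minrank[D,2] \ge \minrank[D]$, and by Proposition \ref{prop:1} none of these quantities is $1$ unless $D$ is empty; so $(a) \Rightarrow (b) \Rightarrow (c)$ would follow once we know $\crank[D]=2$ forces $D$ non-empty, which is Proposition \ref{prop:1} again. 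Thus the real content is: $(c) \Rightarrow (d)$ (a non-empty graph with $\minrank[D]=2$ has the claimed "common in-neighbourhood" structure), $(d) \Rightarrow (e)$, and $(e) \Rightarrow (a)$.

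For $(c) \Rightarrow (d)$, I would argue through the canonical graph. By Lemma \ref{lem:canonical}, $\minrank[\mathrm{C}(D)] = \minrank[D] = 2$, and $\mathrm{C}(D)$ is non-empty (else $\minrank = 1$ by Proposition \ref{prop:1}). Apply the lower bound $L(C) \le \minrank[C]$ of Proposition \ref{prop:bounds_mr}: $L(C) = 2$ means that no two sinks $b, b'$ of $C$ satisfy $\NIn(b) \not\subseteq \NIn(b')$ and $\NIn(b') \not\subseteq \NIn(b)$ simultaneously — i.e. the in-neighbourhoods of the sinks form a chain under inclusion. But in a canonical graph the in-neighbourhood of a non-source vertex cannot be strictly contained in that of another non-source vertex: step 2 of the canonicalization removes a vertex $v_1$ whenever $\NIn(v_1) = \NIn(S)$ for some $S$ of previously-seen vertices, and a strict containment $\NIn(b) \subsetneq \NIn(b')$ would make $b$ redundant (taking $S = \{b'\}$, say, modulo checking the tie-breaking condition). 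Hence all non-source vertices of $\mathrm{C}(D)$ share the same non-empty in-neighbourhood $S$, which is exactly condition $(d)$ transported back to $\mathrm{C}(D)$; unwinding the definition of $\mathrm{C}(D)$ gives the corresponding statement for $D$ itself. The main obstacle here is the careful bookkeeping around the redundancy/tie-breaking rules in the three-step canonicalization — one has to verify that "chain of in-neighbourhoods" plus "canonical" really does collapse to "single common in-neighbourhood," handling the case where some $\NIn(b_j)$ could be realized as a union $\NIn(S)$ of several others.

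For $(d) \Rightarrow (e)$: if every vertex is either a source or has in-neighbourhood exactly $S$, one computes $\mathrm{C}(D)$ directly. The construction doubles vertices into $V_0, V_1$, keeps arcs $u_0 v_1$ for $uv \in E$; all non-source $v_1$ have the same in-neighbourhood $\{u_0 : u \in S\}$, so step 2 deletes all but one of them; step 3 then deletes all but one vertex from $V_0$ since the surviving $V_1$-vertex has a single in-neighbour, and deletes isolated $V_0$-vertices. What remains is a single arc, i.e. $\mathrm{C}(D) \cong \vec{P}_1$. Conversely, for $(e) \Rightarrow (a)$: if $\mathrm{C}(D) \cong \vec{P}_1$, then by the remark following Lemma \ref{lem:canonical}, $\crank[D] = \crank[\mathrm{C}(D)]$, and the conjunctive network on $\vec P_1$ (a single source feeding one sink via a conjunction of one variable) has rank exactly $2$. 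This closes the cycle. I expect $(c)\Rightarrow(d)$ to be the delicate step; the rest is routine manipulation of the canonical form.
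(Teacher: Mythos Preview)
Your overall strategy matches the paper's: the easy chain $(e)\Rightarrow(a)\Rightarrow(b)\Rightarrow(c)$, then $(c)\Rightarrow(d)$ via the lower bound $L(C)$ of Proposition~\ref{prop:bounds_mr}, and $(d)\Rightarrow(e)$ by direct computation of $\mathrm{C}(D)$. However, your execution of $(c)\Rightarrow(d)$ contains two related errors.

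First, your reading of $L(C)=2$ is too weak. You claim it only forces the sink in-neighbourhoods to form a chain, but in fact it forces them all to be \emph{equal}: if $\NIn(b)\ne\NIn(b')$, pick $a\in\NIn(b')\setminus\NIn(b)$ (relabelling if necessary) and the length-two sequence $b,b'$ already witnesses $L(C)\ge 3$, since $\NIn(b)\ne\emptyset$ and $\NIn(b')\not\subseteq\NIn(b)$. No separate chain-plus-canonicality step is needed. Second, your attempted argument for that step is wrong anyway: if $\NIn(b)\subsetneq\NIn(b')$, taking $S=\{b'\}$ gives $\NIn(S)=\NIn(b')\ne\NIn(b)$, so $b$ is \emph{not} made redundant by $S$. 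What canonicality actually rules out is two sinks with \emph{equal} in-neighbourhood, which combined with the corrected first point yields $|B|=1$ directly.

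The ``unwinding'' you flag as delicate is exactly the content of the paper's (very short) argument, stated contrapositively: if $D$ had two non-sources $u,v$ with $\NIn(u)\ne\NIn(v)$, then $\mathrm{C}(D)$ retains at least two sinks with distinct in-neighbourhoods, so $L(\mathrm{C}(D))\ge 3$ and $\minrank[D]\ge 3$. There is also a small slip in your $(d)\Rightarrow(e)$: the surviving sink need not have a \emph{single} in-neighbour; rather, all its in-neighbours in $V_0$ share the same out-neighbourhood (namely that sink), so step~3 collapses them to one.
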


\begin{proof}
Clearly, \ref{it:2t} $\Rightarrow$ \ref{it:2c} $\Rightarrow$ \ref{it:22} $\Rightarrow$ \ref{it:2q}. We now prove \ref{it:2q} $\Rightarrow$ \ref{it:2D}. If $u,v$ are two vertices of $D$ such that $\NIn(u) \ne \NIn(v)$, $\NIn(u) \ne \emptyset$ and $\NIn(v) \ne \emptyset$, then $C$ will contain at least two sinks, corresponding to $u$ and $v$, respectively. Since these two sinks have different in-neighbourhoods in $C$, we obtain that $L(C) \ge 3$ and hence $\minrank[D] \ge 3$.

We finally prove \ref{it:2D} $\Rightarrow$ \ref{it:2t}. If \ref{it:2D} holds, then all but one sinks of $D'$ will be redundant. Removing them yields $D''$ with only one sink. Removing the redundant sources of $D''$ then yields $\mathrm{C}(D)$ with one source and one sink.
\end{proof}

\begin{theorem} \label{th:2n}
For any graph $D$ on $n$ vertices, the following are equivalent.
\begin{enumerate}[label=(\alph*)]
	\item \label{it:2nc} $\crank[D] = 2^n$.

	\item \label{it:2n2} $\minrank[D,2] = 2^n$.

	\item \label{it:2nq} $\minrank[D] = 2^n$.

	\item \label{it:2nD} $D$ is a disjoint union of cycles.

	\item \label{it:2nt} $\mathrm{C}(D)$ has $n$ connected components, all isomorphic to $\vec{P}_1$.
\end{enumerate}
\end{theorem}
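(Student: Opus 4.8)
The plan is to prove the cycle of implications \ref{it:2nt} $\Rightarrow$ \ref{it:2nq} $\Rightarrow$ \ref{it:2n2} $\Rightarrow$ \ref{it:2nc} $\Rightarrow$ \ref{it:2nD} $\Rightarrow$ \ref{it:2nt}. Two of these come for free: the conjunctive network on $D$ belongs to $\functions[D,2]$ (its interaction graph is easily seen to be exactly $D$, since setting all the other in-coordinates to $1$ recovers each $x_u$ with $u \in \NIn(v)$), so
\[
	\minrank[D] \;\le\; \minrank[D,2] \;\le\; \crank[D] \;\le\; 2^n,
\]
where the first inequality is Corollary \ref{cor:mr} and the last holds since any map on $\<2\>^n$ has at most $2^n$ images. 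Hence $\minrank[D] = 2^n$ collapses the whole chain (giving \ref{it:2n2}), and $\minrank[D,2] = 2^n$ forces $\crank[D] = 2^n$ (giving \ref{it:2nc}).

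For \ref{it:2nt} $\Rightarrow$ \ref{it:2nq}, if $\mathrm{C}(D)$ is a disjoint union of $n$ copies of $\vec{P}_1$, then $\minrank[D,q] = \minrank[\mathrm{C}(D),q]$ by Lemma \ref{lem:canonical}, and since the minimum rank multiplies over connected components this equals $\minrank[\vec{P}_1,q]^n$. A single arc $u \to v$ forces $f_v$ to depend essentially on $x_u$, so every $f \in \functions[\vec{P}_1,q]$ has $\rank(f) \ge 2$, with equality attained; thus $\minrank[\vec{P}_1,q] = 2$ for all $q$ (equivalently, this is the $\mathrm{C}(D) \cong \vec{P}_1$ case of Theorem \ref{th:2}). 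Therefore $\minrank[D,q] = 2^n$ for every $q$, and in particular $\minrank[D] = 2^n$.

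The substantive step is \ref{it:2nc} $\Rightarrow$ \ref{it:2nD}, and here I would first translate $\crank[D]$ into set language. Identifying a state $x \in \<2\>^n$ with its support $T_x = \{v : x_v = 1\}$, the conjunctive network $g$ satisfies $T_{g(x)} = \phi(T_x)$, where $\phi : 2^{[n]} \to 2^{[n]}$ is the monotone operator $\phi(T) := \{v : \NIn(v) \subseteq T\}$; hence $\crank[D] = |\{\phi(T) : T \subseteq [n]\}|$, and $\crank[D] = 2^n$ iff $\phi$ is surjective, iff (a self-map of a finite set) $\phi$ is injective. Assuming $\phi$ injective, monotonicity and injectivity give $\phi(\emptyset) \subsetneq \phi(\{u\})$ for every $u \in [n]$, so there is a vertex $w_u \in \phi(\{u\}) \setminus \phi(\emptyset)$, which means $\NIn(w_u) = \{u\}$; since distinct $u$ produce vertices with distinct in-neighbourhoods, $u \mapsto w_u$ is an injection of $[n]$ into itself, hence a bijection, so \emph{every} vertex of $D$ has in-degree exactly $1$. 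Writing $p(v)$ for the unique in-neighbour of $v$, we get $\phi(T) = p^{-1}(T)$, and injectivity of $\phi$ forces $p$ surjective (otherwise $\phi(\{u\}) = \emptyset = \phi(\emptyset)$ for $u \notin p([n])$), hence $p$ is a permutation; thus $D$ has arc set $\{(p(v),v) : v \in [n]\}$, which is the functional graph of the permutation $p^{-1}$, i.e. a disjoint union of cycles.

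Finally, \ref{it:2nD} $\Rightarrow$ \ref{it:2nt} is a direct verification of the three steps defining $\mathrm{C}$, carried out componentwise, so it suffices to take $D$ a single $\ell$-cycle. Step 1 produces $D'$, a perfect matching from $V_0$ to $V_1$, since each vertex of a cycle has a unique predecessor and a unique successor. In Step 2 no $v_1$ is redundant, because $\NIn(v_1)$ is a singleton $\{w_0\}$ that occurs in no other in-neighbourhood of $D'$, so no subset $S$ of the other $V_1$-vertices can satisfy $\NIn(S) = \NIn(v_1)$; in Step 3 no $v_0$ is redundant, because it is non-isolated and its out-neighbourhood is a singleton occurring in no other out-neighbourhood. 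Hence $\mathrm{C}$ of an $\ell$-cycle is $\ell$ disjoint copies of $\vec{P}_1$, and over all components $\mathrm{C}(D)$ has $n$ components each isomorphic to $\vec{P}_1$. The main obstacle is \ref{it:2nc} $\Rightarrow$ \ref{it:2nD}: once one spots the reduction to the monotone operator $\phi$ and the inequality $\phi(\emptyset) \subsetneq \phi(\{u\})$, the argument is short, but that reduction is what carries the theorem; everything else is bookkeeping with Corollary \ref{cor:mr} and the definition of $\mathrm{C}$.
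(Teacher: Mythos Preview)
Your proof is correct and your cycle of implications is close to the paper's, but the substantive step \ref{it:2nc} $\Rightarrow$ \ref{it:2nD} is handled by a genuinely different and more self-contained argument. The paper observes that if the conjunctive network on $D$ is a bijection of $\<2\>^n$ then each local function $f_v$ must be balanced, and since the only balanced conjunction is a conjunction of a single variable, every vertex has in-degree one; it then invokes an external result (that if $\functions[D,2]$ contains a permutation then every vertex of $D$ is covered by a cycle) to force out-degree one as well. Your route via the monotone set operator $\phi(T)=\{v:\NIn(v)\subseteq T\}$ and the strict inclusion $\phi(\emptyset)\subsetneq\phi(\{u\})$ extracts both facts---in-degree one everywhere, and the in-neighbour map $p$ a permutation---directly from injectivity of $\phi$, with no citation required. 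This is a cleaner and more elementary argument for that implication; the paper's version is shorter to state only because it outsources the out-degree half. The remaining implications (your \ref{it:2nt} $\Rightarrow$ \ref{it:2nq} via Lemma~\ref{lem:canonical} and Theorem~\ref{th:2}, and your direct verification of \ref{it:2nD} $\Rightarrow$ \ref{it:2nt}) match the paper's treatment in spirit, though the paper chooses to prove \ref{it:2nt} $\Leftrightarrow$ \ref{it:2nD} separately rather than as one link in the cycle.
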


\begin{proof}
Clearly, \ref{it:2nD} $\Rightarrow$ \ref{it:2nt}. We now prove that \ref{it:2nt} $\Rightarrow$ \ref{it:2nD}. Suppose \ref{it:2nt} holds, then the arcs of $C = \mathrm{C}(D)$ are $a_1b_1, \dots, a_nb_n$. Since $D$ has exactly $n$ vertices, we have $C = D'$ and all the arcs in $C$ correspond to arcs in $D$. Thus in $D$, every vertex has in- and out-degree equal to one, hence $D$ is the disjoint union of cycles.

Clearly, \ref{it:2nD} $\Rightarrow$ \ref{it:2nq} $\Rightarrow$ \ref{it:2n2} $\Rightarrow$ \ref{it:2nc}. We now prove that \ref{it:2nc} $\Rightarrow$ \ref{it:2nD}. If the conjunctive network on $D$ is a permutation of $\<2\>^n$, then all its local functions are balanced, i.e. $|f_v^{-1}(0)| = |f_v^{-1}(1)|$. Only a conjunction of one variable is balanced, thus all vertices have in-degree one in $D$. Moreover, since $\functions[D,2]$ contains a permutation of $\<2\>^n$, then all the vertices of $D$ must be covered by cycles \cite{Gad18}. Thus, $D$ is a disjoint union of cycles. 
\end{proof}

\subsection{The conjunctive network does not minimise the rank}

The results above seem to indicate that the conjunctive network has a low rank. Here we exhibit a graph $D$ such that the conjunctive network on $D$ does not minimise the rank over $\functions[D, 2]$, and the minimum rank over $\functions[D, n-1]$ is exponentially smaller than the minimum rank over $\functions[D, 2]$. For any $n \ge 2$, let $D = ([n+1], E) \in \mathcal{T}_n$ with
\[
	E = \{ 1v : v \ne 1 \} \cup \{ vv : v \ne 1 \}.
\]

\begin{theorem}
For all odd $n$, we have
\begin{align*}
	\crank[D]  &= 2^n,\\
	\minrank[D, 2] &= 2^{\lceil n/2 \rceil } + 2^{\lfloor n/2 \rfloor } - 1,\\
	\minrank[D, n] &= \minrank[D] = n+1.
\end{align*}
\end{theorem}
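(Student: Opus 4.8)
The plan is to handle the three equalities separately, using different tools for each. For the middle claim, $\minrank[D,2] = 2^{\lceil n/2 \rceil} + 2^{\lfloor n/2 \rfloor} - 1$, I would first note that the graph $D$ has a single source $1$ (call it $a$) feeding all the sink-loop vertices $2,\dots,n+1$ (call them $b_1,\dots,b_n$), and each $b_j$ has in-neighbourhood $\{a, b_j\}$. For the upper bound I would construct an explicit $f \in \functions[D,2]$: partition the indices $\{1,\dots,n\}$ into two blocks, use the value of $x_a$ as a ``clock bit'' selecting which block is active, and on each block run a map that collapses everything except one long cycle/path so that exactly $2^{\lceil n/2\rceil}$ (resp. $2^{\lfloor n/2\rfloor}$) states survive; the two blocks share only the image point where all coordinates agree, giving the $-1$. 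One must double-check that every required arc (the loop $b_jb_j$ and the arc $ab_j$) is essential, which is the fiddly but routine part. For the lower bound I would use the refined lower bound $L'(C)$ from the minimisation problem: the canonical graph $\mathrm{C}(D)$ is $H \in \mathcal{T}_n$ itself (since $D$ is already essentially canonical — this needs a short verification that no sink is redundant), and any function $r : 2^{B} \to \mathbb{N}$ feasible for that LP must, because every two sinks share the source $a$ in their in-neighbourhoods, satisfy $r(S\cup b) \ge r(S)+1$ for all $S, b$; but with only $2$ symbols available at the source, I would argue more carefully using a direct pigeonhole/counting argument on $f_B(\cdot, x_a = 0)$ versus $f_B(\cdot, x_a = 1)$ to recover exactly $2^{\lceil n/2 \rceil} + 2^{\lfloor n/2 \rfloor} - 1$ rather than the weaker $n+1$ bound the LP alone gives.

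For $\crank[D] = 2^n$, I would simply invoke Theorem~\ref{th:2n}: the conjunctive network has rank $2^n$ if and only if $D$ is a disjoint union of cycles. Since $D$ is \emph{not} a disjoint union of cycles (the source $1$ has out-degree $n \ge 2$), that route gives $\crank[D] < 2^n$ — so I have the \emph{direction wrong} and must instead compute $\crank[D]$ directly. The conjunctive network here is $f_1 = x_1$ (empty conjunction over $\NIn(1)=\emptyset$ would be $1$; but $1$ is a source so in fact $\NIn(1)=\emptyset$ and $f_1 \equiv 1$), wait — more carefully, $\NIn(v) = \{1,v\}$ for $v\ne 1$, so $f_v(x) = x_1 \wedge x_v$, and $f_1 \equiv 1$. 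Then $\Ima(f) = \{(1, y_2,\dots,y_{n+1})\}$ ranging over all $y$, which has size $2^n$. So in fact $\crank[D] = 2^n$ holds and the classification (d)$\Leftrightarrow$(a) of Theorem~\ref{th:2n} would be \emph{contradicted} unless I recheck: the issue is that $D \notin \{$disjoint unions of cycles$\}$, yet $\crank[D]=2^n$. I would resolve this by recomputing: actually $f_v = x_1 \wedge x_v$ is essential in $x_v$ only when $x_1$ can be $1$, which it always can, so the loops are genuine; and $f_1\equiv 1$ makes $1$ a source, not part of a cycle — so this genuinely seems to violate Theorem~\ref{th:2n}. The safe resolution for the writeup is to carefully recompute $\crank[D]$ from the definition, presumably obtaining $2^n$, and trust the direct computation; I would present the direct argument $\Ima(f) = \{1\}\times\<2\>^n$ and not rely on the characterisation.

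For $\minrank[D,n] = \minrank[D] = n+1$: since $D \in \mathcal{T}_n$, the remark preceding Proposition~\ref{prop:matching_bounds} already gives exactly $L(\mathrm{C}(D)) = \minrank[D,n] = \minrank[D] = U(\mathrm{C}(D)) = n+1$, so this equality is immediate and requires no new work beyond citing that remark (and Proposition~\ref{prop:matching_bounds} or Proposition~\ref{prop:bounds_mr} for the sandwiching). Assembling the three parts, the comparison $2^{\lceil n/2\rceil} + 2^{\lfloor n/2\rfloor} - 1 > n+1$ for $n \ge 2$ shows the alphabet size strictly matters, and $2^{\lceil n/2\rceil}+2^{\lfloor n/2\rfloor}-1 < 2^n$ for $n\ge 3$ shows the conjunctive network is not optimal over $\<2\>$. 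The main obstacle is the lower bound $\minrank[D,2] \ge 2^{\lceil n/2\rceil}+2^{\lfloor n/2\rfloor}-1$: the LP relaxation $L'(C)$ is too weak (it only knows all sinks pairwise intersect, yielding $n+1$), so one needs a genuinely binary-alphabet argument — I expect to split on the value of the source bit $x_1 \in \{0,1\}$, bound the number of images with $x_1$-coordinate fixed by analysing the induced map on the $b_j$'s (which with $x_1$ fixed is a self-map of $\<2\>^n$ whose interaction graph is contained in the loops only, hence a product of maps on $\<2\>$, so at most... this needs care since it's only $\le 2^n$ trivially), and then optimise the trade-off between the two halves — this balancing is what produces the $\lceil n/2\rceil, \lfloor n/2\rfloor$ split.
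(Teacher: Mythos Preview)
Your overall strategy matches the paper's: direct computation for $\crank[D]$, the $\mathcal{T}_n$ remark for $\minrank[D,n]=\minrank[D]=n+1$, and a ``split on the source bit $x_1$'' analysis for $\minrank[D,2]$. Two points need correcting. First, the apparent clash with Theorem~\ref{th:2n} is illusory: that theorem is stated for a graph on $n$ vertices, whereas your $D$ has $n+1$ vertices, so $\crank[D]=2^n$ is strictly below $2^{n+1}$ and there is nothing to reconcile. Your direct computation $\Ima(f)=\{1\}\times\<2\>^n$ is exactly what the paper does.

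The real gap is in the $\minrank[D,2]$ argument. For the upper bound, your ``collapse everything except one long cycle/path'' is over-engineered and does not obviously respect the required interaction graph; the paper simply takes $g_v(x)=x_1\wedge x_v$ on one half of $\{2,\dots,n+1\}$ and $g_v(x)=\lnot x_1\wedge x_v$ on the other half, and reads off the image directly. For the lower bound, you correctly observe that fixing $x_1\in\{0,1\}$ turns $f_{\{2,\dots,n+1\}}$ into a product of self-maps of $\<2\>$, but you stop short of the decisive step. The paper classifies each $v\ge 2$ by the set $X_v=\{a\in\{0,1\}: f_v(a,0)\ne f_v(a,1)\}$, which is nonempty since the loop at $v$ is essential, yielding a partition $\{2,\dots,n+1\}=Z_0\cup Z_1\cup Z$ according to whether $X_v=\{0\},\{1\},\{0,1\}$. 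One then sees that the image from inputs with $x_1=0$ is \emph{exactly} the affine set where the $Z_1$-coordinates are pinned to constants (size $2^{|Z_0|+|Z|}$), and symmetrically for $x_1=1$; the two pieces intersect in a set of size $2^{|Z|}$, giving $\rank(f)=2^{|Z_0|+|Z|}+2^{|Z_1|+|Z|}-2^{|Z|}$. Minimising this expression over partitions (one checks that pushing mass out of $Z$ never increases it, then balances $|Z_0|,|Z_1|$) yields $2^{\lceil n/2\rceil}+2^{\lfloor n/2\rfloor}-1$. Your ``trade-off between the two halves'' intuition is the right endgame, but without the $Z_0/Z_1/Z$ bookkeeping and the exact image description you cannot get a lower bound on $\rank(f)$ --- a mere upper bound on each half (which is all ``product of self-maps'' gives without further work) goes the wrong way.
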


\begin{proof}
The image of the conjunctive network on $D$ is $\{ (1,y) : y \in \<2\>^n \}$, hence its rank is $2^n$. 

Now, let us give a lower bound on the minimum rank for any $f \in \functions[D, 2]$. For any $v \ge 2$, there exists at least one value of $x_1$ for which $f_v(x_1, x_v)$ depends on $x_v$:
$$
	X_v = \{ a \in \{0,1\} : f_v(a, 0) \ne f_v(a, 1) \}.
$$
Let $Z_0 = \{v : X_v = \{0\} \}$, $Z_1 = \{v : X_v = \{1\} \}$, $Z = \{v : X_v = \{0, 1\} \}$. We denote $x = (x_1, x_{Z_0}, x_{Z_1}, x_Z)$ for all $x \in \<2\>^n$. Then there exist $c^0 \in \{0,1\}^{|Z_0|}$ and $c^1 \in \{0,1\}^{|Z_1|}$ such that
\[
	\Ima( f ) = \{ x : x_1 = f_1, x_{Z_1} = c^1 \}  \cup \{ x : x_1 = f_1, x_{Z_0} = c^0 \}.
\]
The intersection between the two last sets is $\{ x : x_1 = f_1, x_{Z_0} = c^0, x_{Z_1} = c^1 \}$, hence the rank of $f$ is
$$
	\rank(f) = 2^{|Z_0| + |Z|} + 2^{|Z_1| + |Z|} - 2^{|Z|} \ge 2 \cdot 2^{\lceil n/2 \rceil } + 2^{\lfloor n/2 \rfloor } - 1.
$$

The lower bound is reached by $g \in \functions[D, 2]$, defined as follows:
\begin{alignat*}{2}
	g_1(x_1) &= 1, &&\\
	g_v(x_1, x_v) &= x_1 \land x_v 		&\qquad& 2 \le v \le \lceil n/2 \rceil + 1,\\
	g_v(x_1, x_v) &= \neg x_1 \land x_v &\qquad& \lceil n/2 \rceil + 2 \le v \le n+1.
\end{alignat*}
Then for $g$, $Z$ is empty, $|Z_0| = \lfloor n/2 \rfloor$, and $|Z_1| = \lceil n/2 \rceil$, and hence its rank reaches the lower bound.

Finally, since $D \in \mathcal{T}_n$, we have $\minrank[D, n] = \minrank[D] = n+1$.
\end{proof}

\section{Survey} \label{sec:survey}

We focus on three dynamical properties: the rank, the periodic rank and the number of fixed points. For each, we consider the minimum, the average and the maximum value it can take in $\functions[D,q]$ or $\functions(D,q)$. In order to highlight the influence of the interaction graph, we give those values for univariate functions in the set ${\bf F}_q := \{ \phi : [q] \to [q] \}$.

\subsection{Number of fixed points}

\paragraph{Minimum}

The minimum number of fixed points is settled. We have that $\minfix[D,q] = 1$ if $D$ is acyclic (an easy consequence of Robert's theorem), while $\minfix[D,q] = 0$ otherwise \cite{AS}.

In ${\bf F}_q$, the number of fixed-point free functions is exactly $(q-1)^q$, hence the proportion of fixed-point free functions tends to $1/e$. In fact, more is known: for every fixed $k$, the proportion of functions in ${\bf F}_q$ with exactly $k$ fixed points tends to $e^{-1}/k$. On the other hand, little is known about the proportion of fixed-point free functions in $\functions[D, q]$, when $q$ tends to infinity. In \cite{Gad17a}, it is shown that for $D = \vec{C}_n$, the directed cycle, that proportion does tend to $1/e$.

\begin{problem}
Is the proportion of fixed-point free functions in $\functions[D,q]$ positive, as $q$ tends to infinity?
\end{problem}

\paragraph{Average}

The average number of fixed points is equal to $1$ for every $D$ and $q$ \cite{Gad17a}.

Moreover, for $q \ge 3$, $\functions[D,q]$ contains a function with exactly one fixed point (see the section on the minimum number of periodic points). This is not true when $q=2$; for instance if $D = \vec{C}_n$, then every function in $\functions[\vec{C}_n, 2]$ has either two fixed points (a positive cycle) or no fixed points (a negative cycle).

\begin{problem}
Can we determine which graphs have a function with exactly one fixed point?
\end{problem}

\paragraph{Maximum}


The maximum number of fixed points in $\functions[D,q]$ or in $\functions(D,q)$ has been the subject of a lot of work. We shall split the main results into four parts: upper bounds, lower bounds, exact results and asymptotic values.

\medskip

Let $x,y \in \<q\>^n$ be two distinct fixed points of $f \in \functions[D, q]$, and consider the positions in which they differ: $\Delta(x, y) = \{ v \in [n] : x_v \ne y_v \}$. Then $D[ \Delta(x,y) ]$ contains a cycle \cite{GR11} (see also \cite{DR12} and \cite{GRR15}). From this observation, we can obtain two upper bounds on the number of fixed points. We denote the maximum cardinality of a code in $\<q\>^n$ with minimum distance $d$ as $A(n,q,d)$. The girth bound \cite{GR11} then asserts that 
\[
	\maxfix(D, q) \le A( n, q, \girth(D) ),
\]
while the feedback bound \cite{Ara08} gives
\[
	\maxfix(D, q) \le q^{\feedback(D)}.
\]
So far, we only know of graphs for which the feedback bound is tighter than the girth bound.

\begin{problem}
Is the girth bound always looser than the feedback bound?
\end{problem}

Riis developed another upper bound on $\maxfix(D, q)$ based on the entropy function and submodularity \cite{Rii07a}. He obtained that $\maxfix(D, q) \le q^{H(D)}$, where $H(D)$ is the optimal value of a solution of the following linear program (with one variable $h_S$ for any $S \subseteq [n]$):
\begin{alignat*}{3}
	\max &\qquad& h_V\\
	\text{s.t.} &\qquad& h_v 		&= 1	&\qquad& \forall v \in [n]\\
				&& 		h_{\NIn(v)} &= h_{\NIn(v) \cup v} &\qquad& \forall v \in [n]\\
				&&		h_S &\le h_T &\qquad&	\forall S \subseteq T \subseteq [n] \\
				&&		 h_{S \cap T} + h_{S \cup T}	&\le h_S + h_T &\qquad& \forall S,T \subseteq [n]
\end{alignat*}
In particular, if $D$ is an odd undirected cycle, i.e. $D = C_{2k+1}$ for $k \ge 2$, then $H(C_{2k+1}) = k + 1/2$ \cite{CM11}.

\medskip

Lower bounds for $\maxfix(D,q)$ can be obtained by packing smaller graphs with relatively high $\maxfix$. For instance, the complete graph $K_n$ satisfies $\maxfix(K_n, q) = q^{n - 1}$ for all $n$ and $q$ (a function with $q^{n-1}$ fixed points is $f_v(x) = - \sum_{u \ne v} x_u$). The clique partition number $\cliqueCover(D)$ is the minimum number of cliques required to cover all the vertices in the graph. Packing cliques then yields 
\[
	\maxfix(D,q) \ge q^{n - \cliqueCover(D)}.
\]
Moreover, the cycle $\vec{C}_n$ has $\maxfix(\vec{C}_n, q) = 1$ for all $n$ and $q$ (here, use $f_v(x) = x_{v-1}$); thus packing cycles yields 
\[
	\maxfix(D, q) \ge q^{\packing(D)}. 
\]
Here, $\packing(D)$ is the cycle packing number, i.e. the largest number of vertex-disjoint cycles in $D$.

These bounds can be refined by considering fractional strategies as follows. Since the alphabet $[q^k]$ is isomorphic to $[q]^k$, working in $\functions[D, q^k]$ can be viewed as working in $\functions[k \oplus D, q]$, where $k \oplus D = ([n] \times [k], \{ (u,i)(v,j) : uv \in E \})$. Therefore, $\maxfix(D, q^k) = \maxfix(k \oplus D, q)$. A fractional cycle packing of a graph $D$ is a family of cycles $B_1, \dots, B_s$ of $D$ together with non-negative weights $w_1, \dots, w_s$ such that $\sum_{i : v \in B_i} w_i \le 1$ for all $v$. In particular, a cycle packing is a fractional cycle packing where the weights belong to $\{0,1\}$. The maximum value of $\sum_{i=1}^s w_i$ over all fractional cycle packings is the fractional packing number and is denoted by $\fractionalPacking(D)$. Similarly, a fractional clique cover of a graph $D$ is a family of cliques $H_1, \dots, H_t$ of $D$ together with non-negative weights $w_1, \dots, w_t$ such that $\sum_{i : v \in H_i} w_i \ge 1$ for all $v$. Again, a clique cover is a fractional clique cover where all the weights belong to $\{0,1\}$. The minimum value of $\sum_{i=1}^t w_i$ over all fractional clique covers is the fractional clique cover number and is denoted by $\fractionalCliqueCover(D)$. For any $D$, there exist $k, k'$ such that 
\[
	\packing(k \oplus D) = k \fractionalPacking(D), \quad \cliqueCover(k' \oplus D) = k' \fractionalCliqueCover(D).
\] 
(See \cite{SU97} for more on fractional graph theory.) Therefore, there exist $k$ and $k'$ such that
\begin{align*}
	\maxfix(D, q^k) &\ge q^{k(n - \fractionalCliqueCover(D))}\\
	\maxfix(D, q^{k'}) &\ge q^{k' \fractionalPacking(D)}.
\end{align*}
In particular, if $D = C_{2k+1}$, then $n - \fractionalCliqueCover(C_{2k+1}) = k + 1/2$.

We finally note some lower bounds on $\maxfix(D,q)$ obtained from the guessing graph approach in \cite{GR11}. We have the code bound
\[
	\maxfix(D, q) \ge A(n, q, n - \delta(D) + 1),
\]
where $\delta(D)$ is the minimum in-degree of a vertex in $D$. In particular, $\maxfix(D,q) \ge q^{\delta(D)}$ for any prime power $q \ge n$. For other values of $q$, we have the competing bound
\[
	\maxfix(D, q) \ge \frac{ q^{\delta(D)} }{ n }.
\]

We now turn to lower bounds on $\maxfix[D,q]$. First of all, we can easily add ``ghost dependencies'' to any $f \in \functions(D,q-1)$ to create $f' \in \functions[D, q]$ with at least as many fixed points as $f$. Therefore, 
\[
	\maxfix[D,q] \ge \maxfix(D, q-1). 
\]
Remarkably, \cite{ARS17} shows two ways how packing cycles can also help with $\maxfix[D, 2]$. Firstly, 
\[
	\maxfix[D, 2] \ge \packing(D) + 1.
\]
The function achieving this number of fixed points can be schematically described as follows. Let $C_1, \dots, C_\nu$ be a collection of disjoint cycles; for the sake of simplicity, assume that they cover all the vertices of $D$. For any $0 \le i \le \nu$, let $S_i = C_1 \cup \dots \cup C_{i-1}$, $T_i = C_{i+1} \cup \dots \cup C_\nu$, and say that the predecessor of $v_i$ in $C_i$ is $u_i$. Then let
\[
	f_{v_i}(x) =   \left( x_{u_i} \land  \bigwedge_{s \in \NIn(v_i) \cap S_i} x_s   \right) \lor \bigvee_{t \in \NIn(v_i) \cap T_i} x_t.
\]
The fixed points of $f$ are all the $y^k \in [2]^n$ such that $y^k_{S_k} = (1, \dots, 1)$ and $y^k_{T_k} = (0, \dots, 0)$ for $0 \le k \le \nu$. This construction is monotone; the $\packing(D) + 1$ lower bound is proved to be tight for monotone Boolean networks in \cite{ARS17}, but it is still open whether it is tight in general.

\begin{problem}
Does there exist a graph $D$ with $\maxfix[D,2] =  \packing(D) + 1$?
\end{problem}

Secondly, \cite{ARS17} introduces the notion of principal packing. Intuitively, this is a packing of $k$ cycles where the effect of the arcs amongst the cycles can be compensated, and hence we recover exactly $2^k$ fixed points. Formally, given a cycle packing $C_1, \dots, C_k$, a path $P$ of $D$ is said principal if it has no arc and no internal vertex in the packing. A cycle $C$ is also said principal if it is not in the packing and has exactly one vertex in the packing. We then say that the packing is principal if the following holds: for every cycle $C_i$ and for every vertex $v$ in $C_i$, if there exists a principal path from a cycle $C_j\neq C_i$ to $v$, then either there exists a principal path from $C_i$ or a source to $v$, or there exists a special cycle containing $v$. We denote by $\packing'(D)$ the maximum size of a special packing in $D$. Then $\maxfix[D,2] \ge 2^{\packing'}$. A comparison between $\packing$ and $\packing'$ is also carried out in \cite{ARS17}.

\medskip

The value of $\maxfix(D,q)$ or $\maxfix[D,q]$ is known for many classes of graphs. First of all, if $\feedback(D) = \cliqueCover(D)$, then the clique cover strategy reaches the feedback bound; this is notably the case when $D$ is a symmetric perfect graph. Moreover, the entropy upper bound meets the fractional clique cover lower bound for odd cycles and their complements \cite{CM11}: $H(C_{2m+1}) = m + 1/2 = (2m+1) - \fractionalCliqueCover(C_{2m+1})$ and $H(\bar{C}_{2m+1}) = 2m - 1 - 1/k = (2m+1) - \fractionalCliqueCover(\bar{C}_{2m+1})$.

The value of $\maxfix(D,q)$ is equal to $q^{\feedback(D)}$ if $\feedback(D) \in \{0, 1, n\}$, since in those cases, $\feedback(D) = \packing(D)$. Moreover, it is implicit from the work in \cite{GR11, GRF16}, that this is still the case when $\feedback(D) = n-1$. We also have $\maxfix(D, q) = q^{\feedback(D)}$ for $\feedback(D) = 2$; this is a highly nontrivial result, presented in the context of linear network coding in \cite{SD10}. We also have $\maxfix[D,q] = q^{\feedback(D)}$ for $\feedback(D) \in \{0,1\}$ \cite{GRF16}. The value of $\maxfix[D,q]$ is also known for $\feedback(D) = n$, i.e. for loop-full graphs \cite{GRF16}. For any loopless $D$, an in-dominating set is a set of vertices $X \subseteq [n]$ such that for all $v \in [n]$ with positive in-degree, either $v \in X$ or $\NIn(v) \cap X \ne \emptyset$. Denote the number of in-dominating sets of $D$ of size $k$ by $I_k(D)$. Let $\loopfull{D}$ be the graph obtained by adding a loop on every vertex of $D$. Then
\[
	\maxfix[\loopfull{D}, q] = \sum_{k=0}^n (q-1)^k I_k(D).
\]

\medskip

The $q$-strict guessing number is $\guessing[D,q] := \log_q \maxfix [D,q]$, while the $q$-guessing number is $\guessing(D,q) := \log_q \maxfix (D,q)$. The guessing number and the strict guessing number tend to the same limit as $q$ tends to infinity \cite{CM11}:
\[
	\guessing(D) = \sup_{q \ge 2} \guessing[D,q] = \lim_{q \to \infty} \guessing[D,q] =  \sup_{q \ge 2} \guessing(D,q) = \lim_{q \to \infty} \guessing(D,q).
\]
Let us call this limit the asymptotic guessing number of $D$. Clearly, if $\feedback(D) = n$, then for all $q$ we have $\guessing[D,q] < \guessing(D,q) = \guessing(D) = n$, thus the asymptotic guessing number is not always achieved for by the strict $q$-guessing number.

\begin{problem}
Is the asymptotic guessing number always achieved by the guessing number, i.e. for any $D$, does there exist $q$ such that $\guessing(D,q) = \guessing(D)$?
\end{problem}

Another major problem about the asymptotic guessing number is the values it can take. Let $G$ be the set of asymptotic guessing numbers of all finite graphs, while $G'$ is the set of asymptotic guessing numbers of all finite symmetric graphs. $G'$ is sparse, since \cite{Gad18} shows that $G' \cap [0,k]$ is finite for all positive integers $k$.

\begin{problem}
Let $G$ be the set of all asymptotic guessing numbers, then is $G \cap [0,k]$ finite for all $k$?
\end{problem}

\begin{tabular}{| m{2cm} || m{2cm} | m{5cm} | m{5cm} |}
	\hline
	\textbf{Fixed points}	& ${\bf F}_q$				& $\functions[D,q]$															& $\functions(D,q)$\\
	\hline \hline
	Minimum						& 0 					& 1 if $D$ is acyclic \newline 0 otherwise			& 0\\
	\hline
	Average						& 1 					& 1 		& 1\\
	\hline
	Maximum					& $q$	
	& 	$= q^\feedback$ if $\feedback \in \{0,1\}$ 
		\newline
		$= \sum_k (q-1)^k I_k$ if $\feedback = n$ 
		\newline
		$\ge \maxfix(D,q-1)$ 
	 	\newline
		$\ge \packing + 1$ if $q=2$
		\newline
	 	$\ge 2^{\packing'}$ if $q = 2$ 
	&	$=q^\feedback$ if $\feedback \in \{0,1,2,n-1,n\}$ 
		\newline
		$\le A(n,q,\girth)$ 
		\newline
		$\le q^{\feedback}$ 
		\newline
		$\le q^H$ 
		\newline
		$\ge q^{n - \fractionalCliqueCover}$ 
		\newline
		$\ge q^{\fractionalPacking}$
		\newline
		$\ge A(n,q,n - \delta + 1)$
		\newline
		$\ge q^\delta/n$ \\
	\hline
\end{tabular}

\subsection{Number of periodic points}

\paragraph{Minimum} 
A function $f \in \functions(n,q)$ is nilpotent if it only has one periodic point. Equivalently, $f$ is nilpotent if there exists a state $y \in \<q\>^n$ such that $f^k(x) = y$ for all $x \in \<q\>^n$ and some $k \ge 1$. The smallest $k$ for which this holds is referred to as the class of $f$. For instance, $f$ is constant if and only if it is nilpotent of class $1$; Robert's theorem then asserts that any FDS with an acyclic interaction graph is nilpotent of class at most $n$. The study of the existence of nilpotent functions in $\functions[D,q]$ was initiated in \cite{GR16}; let us review some of the results therein.

First of all, for any $q \ge 3$ and any $D$, $\functions[D,q]$ contains a nilpotent function of class two, which is easy to describe:
\[
	f_v(x) = \begin{cases}
	0 & \text{if } x_u \in \{0,1\} \, \forall u \in \NIn(v)\\
	1 & \text{otherwise.}
	\end{cases}
\]
We then focus on the Boolean case, where the situation is much more complex. For instance, the cycle $\vec{C}_n$ only admits bijective Boolean networks, hence the minimum number of periodic points is $2^n$. Many graphs $D$ do admit nilpotent Boolean networks. Without loss, we can assume that $D$ is strong. Then $D$ admits a nilpotent Boolean network if one of the following holds:
\begin{enumerate}
	\item $D$ has a loop (and is not the graph with one vertex and a loop on it);

	\item $D$ is symmetric (and is not $K_2$);

	\item $D$ contains a primitive strict spanning subgraph.
\end{enumerate}
But in general, we do not know which graphs admit a nilpotent Boolean network.

\begin{problem}
Which graphs admit a nilpotent Boolean network? Is there a polynomial-time algorithm to determine whether a graph admits a nilpotent Boolean network?
\end{problem}


If $D$ admits a nilpotent Boolean network, then we can define the class of $D$ as the minimal class of a nilpotent Boolean network in $\functions[D,2]$. If $D$ falls into one of the three categories above, then its class is at most quadratic in $n$. Conversely, the cycle $\vec{C}_n$ with one added loop does admit a nilpotent Boolean network, but its class is $2n-1$. 

\begin{problem}
What is the maximum class of a nilpotent graph on $n$ vertices? Is it polynomial?
\end{problem}


\paragraph{Average} 
The average number of periodic points in ${\bf F}_q$ is found to be asymptotically $\sqrt{\pi q/2}$ in \cite{FO89}, where many other properties of random mappings are derived. This result is far from trivial, and its proof uses the machinery of analytic combinatorics described in detail in \cite{FS09}. This is the only result we know so far (omitting the obvious fact that the average number of periodic points is equal to $1$ if $D$ is acyclic), and it is likely that any advance in that area would require sophisticated and involved proofs.


\begin{problem}
Can we obtain any meaningful result on the average number of periodic points?
\end{problem}

\paragraph{Maximum}
The maximum number of periodic points was almost entirely determined in \cite{Gad17}. Let $\alpha_n(D)$ denote the maximum number of vertices of $D$ that can be covered by disjoint cycles. Then there exists a natural function $f \in \functions(D,q)$ with $q^{\alpha_n(D)}$ periodic points, described as follows. Fix a collection of disjoint cycles that altogether cover $\alpha_n(D)$ vertices; call the set of uncovered vertices $U$. Then, if $v'$ is the predecessor of $v$ on one of the cycles, then let $f_v(x) = x_{v'}$, and say $f_u(x) = 0$ for all $u \in U$. Clearly, any $x$ with $x_U = (0, \dots, 0)$ is a periodic point of $f$.

This simple strategy is in fact optimal: 
\[
	\maxper(D,q) = q^{\alpha_n(D)}
\]
for all $D$ and $q \ge 2$. For the case of $\functions[D,q]$, we can reach $q^{\alpha_n(D)}$ whenever $q \ge 3$; however, there are graphs $D$ such that $\maxper[D,2] < 2^{\alpha_n(D)}$ (for instance, the cycle $\vec{C}_n^\circ$ with a loop on each vertex, for $n \ge 2$). 

\begin{problem}
Which classes of graphs do not attain the $2^{\alpha_n(D)}$ upper bound on $\maxper[D,2]$?
\end{problem}

In general, the value of $\maxper[D,2]$ is open.

\begin{problem}
Can we derive a meaningful upper bound on $\maxper[D,2]$?
\end{problem}

\begin{tabular}{| m{2cm} || m{2cm} | m{5cm} | m{5cm} |}
	\hline
	\textbf{Periodic points}					& ${\bf F}_q$				& $\functions[D, q]$															& $\functions(D,q)$\\
	\hline \hline
	Minimum	& 1 	
	&	1 if $q \ge 3$ 
		\newline
		1 if $q=2$ for many graphs $D$
		\newline
		$2^n$ if $q=2$ and $D = \vec{C}_n$ 
	& 1\\
	\hline
	Average						& $\sim \sqrt{\pi q/2}$	& 1 if $D$ is acyclic			& 1 if $D$ is acyclic\\
	\hline
	Maximum	& $q$	& $q^{\alpha_n}$ if $q \ge 3$ & $q^{\alpha_n}$\\
	\hline
\end{tabular}

\subsection{Number of images}

\paragraph{Minimum} The minimum rank is studied above.


\paragraph{Average} In ${\bf F}_q$, a simple counting argument shows that the average rank is given by $(1 - (1 - q^{-1})^q )q$, which tends to $(1 - e^{-1})q$ for $q$ large. Similarly, the average rank in $\functions[D,q]$ is only a constant away from $\maxrank[D,q]$ (the latter being equal to $q^{\alpha_1(D)}$, see below) \cite{Gad17}. More precisely, for every $D$, there is a constant $c_D > 0$ such that $\avgrank[D,q] \ge c_D \cdot \maxrank[D,q]$ for all $q$ sufficiently large. 

\begin{problem}
For any $D$, does the ratio $\frac{ \avgrank[D,q] }{ \maxrank[D,q] }$ tend to a limit as $q$ tends to infinity?
\end{problem}

The ratio can decrease exponentially with $\alpha_1(D)$, since $\avgrank[D, q] \sim (1 - e^{-1})^n q^n$ if $D$ is the graph with only $n$ loops. Nonetheless, \cite{Gad17} only shows that $c_D \ge 2^{-3 (2^{\alpha_1(D)} - 1)}$, so there is arguably room for improvement. 

\begin{problem}
Does there exist an absolute constant $K > 0$ such that for any $D$, $\avgrank[D,q] > (Kq)^{\alpha_1(D)}$ for $q$ sufficiently large?
\end{problem}


\paragraph{Maximum} The maximum rank behaves in a very similar fashion to the maximum periodic rank, and was also almost completely determined in \cite{Gad17}. This time, say two arcs $uv$ and $u'v'$ are independent if $u \ne u'$ and $v \ne v'$. Let $\alpha_1(D)$ be the maximum number of pairwise independent arcs of $D$. Again, we can construct a natural function in $\functions(D,q)$ with $q^{\alpha_1(D)}$ images. Let $u_1v_1, \dots, u_kv_k$ be a family of pairwise independent arcs, denote $U = [n] \setminus \{v_1, \dots, v_k\}$ and let $f_{v_i}(x) = u_i$ for all $i$ and $f_u(x) = 0$ for $u \in U$; then the image of $f$ is all the $x$ such that $x_U = (0, \dots, 0)$. This strategy is optimal: for any $D$ and $q \ge 2$,
\[
	\maxrank(D, q) = q^{\alpha_1(D)}.
\]
And similarly to periodic points, the maximum rank in $\functions[D,q]$ is equal to $q^{\alpha_1(D)}$ whenever $q \ge 3$, but not necessarily when $q = 2$.

\begin{problem}
Which classes of graphs do not attain the $2^{\alpha_1(D)}$ upper bound on $\maxrank[D,2]$?
\end{problem}

In general, the value of $\maxrank[D,2]$ is also open.

\begin{problem}
Can we derive a meaningful upper bound on $\maxrank[D,2]$?
\end{problem}

\begin{tabular}{| m{2cm} || m{2cm} | m{5cm} | m{5cm} |}
	\hline
	\textbf{Images}					& $\functions(1,q)$				& $\functions[D, q]$															& $\functions(D,q)$\\
	\hline \hline
	Minimum						& 1 					& See above		& 1\\
	\hline
	Average						& $\sim (1 - e^{-1}) q$	& $\ge c_D q^{\alpha_1}$ & $\ge c_D q^{\alpha_1}$\\
	\hline
	Maximum	& $q$	& $q^{\alpha_1}$ if $q \ge 3$ & $q^{\alpha_1}$ \\
	\hline
\end{tabular}


\providecommand{\bysame}{\leavevmode\hbox to3em{\hrulefill}\thinspace}
\providecommand{\MR}{\relax\ifhmode\unskip\space\fi MR }
\providecommand{\MRhref}[2]{%
  \href{http://www.ams.org/mathscinet-getitem?mr=#1}{#2}
}
\providecommand{\href}[2]{#2}

\end{document}